\newcolumntype{L}[1]{>{\raggedright\let\newline\\\arraybackslash\hspace{0pt}}m{#1}}
\newcolumntype{C}[1]{>{\centering\let\newline\\\arraybackslash\hspace{0pt}}m{#1}}
\newcolumntype{R}[1]{>{\raggedleft\let\newline\\\arraybackslash\hspace{0pt}}m{#1}}
\definecolor{MyGreen}{rgb}{0, 0.7, 0}
\definecolor{MyRed}{rgb}{0.8, 0, 0}
\title{Participatory Funding Coordination:\\ Model, Axioms and Rules}
\author{Haris Aziz\inst{1,}\inst{2}  \and Aditya Ganguly\inst{1}}
\institute{UNSW Sydney, Australia  \\
\email{\{haris.aziz,a.ganguly\}@unsw.edu.au}
\and
Data61 CSIRO\\
\medskip
}
\newcommand{\@chapapp}{\relax}%
\begin{document}
	\maketitle

	\begin{abstract}
We present a new model of collective decision making
that captures important crowd-funding and donor coordination scenarios. In the setting, there is a set of projects (each with its own cost) and a set of agents (that have their budgets as well as preferences over the projects). An outcome is a set of projects that are funded along with the specific contributions made by the agents. For the model, we identify meaningful axioms that capture concerns including fairness, efficiency, and participation incentives. We then propose desirable rules for the model and study, which sets of axioms can be satisfied simultaneously. An experimental study indicates the relative performance of different rules as well as the price of enforcing fairness axioms.
	\end{abstract}
	
	
		\section{Introduction}
		Consider a scenario in which a group of house-mates want to pitch in money to buy some common items for the house but not every item is of interest or use to everyone. 
		Each of the items (e.g. TV, video game console, music system, etc.) has its price.  
		Each resident would like to have as many items purchased that are useful to her. She may have concerns about whether she is getting enough value for the contribution she makes. It is a scenario that is encountered regularly in numerous shared houses or apartments. 

		As a second scenario, hundreds of donors want to fund charitable projects.
		Each of the projects (e.g. building a well, enabling a surgery, funding a scholarship, etc.)
		has a cost requirement. Donors care about coordinating their donations in a way to fund commonly useful projects and they care about the amount of money that is used towards projects that they approve. How to coordinate the funding in a principled and effective way is a fundamental problem in crowdfunding and donor coordination. The model that we propose is especially suitable for coordinating donoations from alumni at our university. 

		Both of the settings above are coordination problems in which agents contribute money and they have preferences over the social outcomes. A collective outcome specifies which projects are funded and how much agents are charged. 

		\paragraph{Contributions.}
		We propose a formal model that we refer to as \emph{Participatory Funding Coordination (PFC)} that captures many important donor coordination scenarios. In this model, agents have an upper budget limit. They want as many of their approved projects funded.
We lay the groundwork for work on the model by formulating new axioms for the model. The logical relations between the axioms are established and the following question is studied:  which sets of axioms are simultaneously achievable. We propose and study rules for the problems that are inspired by welfarist concerns but satisfy participation constraints. 
		In addition to an axiomatic study of the rules, we also undertake an experimental comparison of the rules. The experiment sheds light on the impact that various fairness or participation constraints can have on the social welfare. This impact has been referred to as the price of fairness in other contexts. In particular, we investigate the effects of enforcing fairness properties on instances that model real-world applications of PFC, including crowdfunding.


		\section{Related Work} 
		Our model generally falls under the umbrella of a collective decision making setting in which agents' donations and preferences are aggregated to make funding decisions. It is a concrete model within the broad agenda of achieving effective altruism~\citep{Maca15a,Maca17a,Peter19a}.

		The model we propose is  related to the discrete participatory budgeting model~\citep{AzSh19,ALT18a,GKS+19a,fain2016core,TaFa19a}. In discrete participatory budgeting, agents do not make personal donations towards the projects. They only express preferences over which projects should be funded.
		We present several axioms that are only meaningful for our model and not for discrete participatory budgeting. Algorithms for discrete participatory budgeting cannot directly be applied to our setting because they do not take into account individual rationality type requirements. 
	
			Another related setting is multi-winner voting~\citep{EFSS17a}. Multi-winner voting can be viewed as a restricted version of discrete participatory budgeting. The Participatory Funding Coordination (PFC) setting differs from multi-winner voting in some key respects: in our model, each project (winner) has an associated cost, and we select projects subject to a knapsack constraint as opposed to having a fixed number of winners. 

		Our PFC model relies on approval ballots in order to elicit agents' preferences. Dichotomous preferences have been considered in several important setting including committee voting~\citep{LaSk19a,ABC+16a} and discrete participatory budgeting~\citep{ALT18a,FSTW19}.

		Another related model that takes into account the contributions of agents was studied by \citet{BBPSS20a}. Just like in our model, an agent's utilities are based on how much money is spent on projects approved by the agent. However, their model 
		does not have any costs and agents can spread their money over projects in any way. 
	Our model has significant differences from the model of \citep{BBPSS19,BBPSS20a}: (1) in our setting, the projects are indivisible and have a minimum cost to complete, (2) agents may not be charged the full amount of their budgets. The combination of these features leads to challenges in even defining simple individual rationality requirements. Furthermore, it creates difficulties in finding polynomial-time algorithms for some natural aggregation rules (utilitarian, egalitarian, Nash product, etc.). 
		Our model is more appropriate for coordinating donations where projects have short-term deadlines and a target level of funding which must be reached for the project to be successfully completed. We show that the same welfarist rules that satisfy some desirable properties in the model \citep{BBPSS19,BBPSS20a}, fail to do so in our model. Just as the work of \citet{BBPSS19,BBPSS20a}, \citet{BHW19a} consider donor coordination for the divisible model in which the projects do not have costs and agents do not have budget limits. They also assume quasi-linear utilities whereas we model charitable donors who are not interested in profit but want their money being used as effectively as possible towards causes that matter to them. 

		The features of our PFC model enable the model to translate smoothly to a number of natural settings. Crowdfunding, in particular, is a scenario in which we would like to capitalise upon commonalities in donors' charitable preferences \citep{CCV15}. Furthermore, crowdfunding projects (e.g. building a well, funding a scholarship, etc.) often have provision points (see e.g. \citep{ACG13a,CGN16,DMCG19}), and it can be critical for these targets to be met (for example, a project to raise funds for a crowdfunding recipient to pay for a medical procedure would have to raise a minimum amount of money to be successful, otherwise all donations are effectively wasted). 

		Crowdfunding projects have been discussed in a broader context with various economic factors and incentive issues presented~\citep{ACG13a}. 
		\citet{BaLi89a} discuss additional fairness and economic considerations for the related topic of the division of public goods. The discrete model that we explore, where projects have finite caps, has potential to coordinate donors and increase the effectiveness of a crowdfunding system. 



		\section{Participatory Funding Coordination}
		A \emph{Participatory Funding Coordination (PFC)} setting is a tuple $(N,C,A,b,w)$ where $N$ is the set of agents/voters, $C$ is the set of projects (also generally referred to as candidates).
		The function $w: C \rightarrow \mathbb{R^+}$ specifies the cost $w(c)$ of each project $c\in C$. 
		The function $b: N \rightarrow \mathbb{R^+}$ specifies the budget $b_i$ of each agent $i\in C$. The budget $b_i$ can be viewed as the maximum amount of money that agent $i$ is willing to spend. For any set of agents $S\subseteq N$, we will denote $\sum_{i\in S}b_i$ by $b(S)$.  The approval profile $A=(A_1,\ldots, A_n)$ specifies for each agent, her set of acceptable projects $A_i$. 
	An \emph{outcome} is a pair $(S,x)$ where $S\subseteq C$ is the set of funded projects and $x$ is a vector of payments that specify for each $i\in N$, the payment $x_i$ that is charged from agent $i$. We will restrict our attention to feasible outcomes in which $x_i\leq b_i$ for all $i\in N$ and only those projects get financial contributions that receive their required amount. Also, note that the projects that are funded are only those that receive the entirety of their price in payments from the agents. 
		For any given PFC instance, a mechanism $F$ returns an outcome. We will denote the set of projects selected by $F$ as $F_C$ and the payments by $F_x$.\footnote{PFC can also be viewed as a matching problem in which the money of agents is matched to projects.}
		For any outcome $(S,x)$, since $x_i\leq b_i$, the money $b_i-x_i$ can either be kept by the agent $i$ or it can be viewed as going into some common pool. The main focus of our problem is to fund a maximal set of projects while satisfying participation constraints. 

		We suppose that an agent's preferences are \emph{approval-based}. 
		For any set of funded projects $S$, any agent $i$'s utility is $$u_i(S)=\sum_{c\in  S\cap A_i}w(c).$$ That is, an agent cares about how many dollars are \emph{usefully} used on his/her approved projects. Our preferences domain is similar to the one used by \citet{BBPSS20a} who considered a continuous model in which projects do not have target costs. In their model, agents also care about how much money is used for their liked projects.

		\begin{example}
		The following is an instance of a PFC problem with 5 agents and 6 projects. The costs of the projects is stated next to the project name. The budget of each agent is mentioned in front of the agent name. The plus sign indicates the approval of an agent for a project. 
	
			\begin{table}[h]
			\centering
					\scalebox{0.85}{
			\begin{tabular}{lcccccccc} \hline
			            & & A (7) & B (6) & C (1) & D (1) & E (8) & F (7) \\ \hline
						 & Budget	&  &  &  & &  & \\ \hline
			    Agent 1 & 3 	    	& + &   & + &   & + &    \\ \hline
			    Agent 2 & 3	   	& + &   &   & + & + &    \\ \hline
			    Agent 3 & 3     	&   & + & + &   &   & +  \\ \hline
			    Agent 4 & 2     	&   & + &   & + &   & +  \\ \hline
			    Agent 5 & 1     	& + &   &   &   & + &    \\ \hline
			\end{tabular}
			}
			\caption{Example of an PFC instance. }
			\label{table: EGAL-IMP PO-Pay Profile}
			\end{table}
			\end{example}



		\section{Axiom design}

		In this section, we design axioms for outcomes of the PFC setting. We consider an outcome $(S,x)$. For any axiom $\mathbf{Ax}$ for outcomes, we say that a mechanism satisfies $\mathbf{Ax}$ if it always returns an outcome that satisfies $\mathbf{Ax}$.

		We first present three axioms for our setting that are based on the principle of participation:

		\begin{itemize}
			\item \textbf{Minimal Return (MR)}: each agent's utility is at least much as the money put in by the agent: $u_i(S)\geq x_i$. In other words, the societal decision is as good for each agent $i$ as $i$'s best use of the money $x_i$ that she is asked to contribute. We will use this as a minimal condition for all feasible outcomes.\footnote{One can strengthen MR to a stronger version in which $u_i(S,x)> x_i$ for each $i\in N$.}
			\item \textbf{Implementability (IMP)}  : There exists a payment function $y:N\times C\rightarrow \mathbb{R}^+ \cup \{0\}$ such that $\sum_{c\in C}y(i,c)=x_i$ for all $i\in N$ and $\sum_{i\in N}y(i,c) \in \{ 0, w(c) \}$ and there exists no $i\in N$ and $c\notin A_i$ such that $y(i,c)>0$. IMP captures the requirement that an agent's contribution should only be used on projects that are approved by the agent. 
			\item \textbf{Individual Rationality (IR):} the utility of an agent is at least as much as an agent can get by funding alone:
				$u_i(S)\geq \max_{S'\subseteq A_i, w(S')\leq b_i}(w(S')).$
				Note that IR is easily achieved if the project costs are high enough: if for $i\in N$ and $c\in C$, $w(c)>b_i$, then every outcome is IR.
		\end{itemize}		
		
		We note that MR is specified with respect to the amount $x_i$ charged to the agent. It can be viewed as a participation property: an agent would only want to participate in the market if she gets at least as much utility as the money she spent. We will show IMP is stronger than MR. IMP can also be viewed as a fairness property: agents are made to coordinate but they only spend their money on the projects they like. 
		


		\begin{remark}
		If there is an IMP outcome where a set of projects are funded, then there is also an IMP outcome where any subset of these projects are funded.
		In order to find an IMP outcome for any subset, simply take the original outcome and set the payments of agents to projects that are being ``de-funded'' to zero.
		\end{remark}
	
		Next, we present axioms that are based on the idea of efficiency. 

		\begin{itemize}
			\item \textbf{Exhaustive (EXH)}: There exists no set $N'\subseteq N$ and project $c\in C\setminus S$ such that $c\in \cap_{i\in N'}A_i  \cap (C \setminus A)$ such that $w(c)\leq \sum_{i\in N'}(b_i-x_i)$.
		In words, agents in $N'$ cannot pool in their unspent money and fund another project liked by all of them. 
			\item \textbf{Pareto optimality (PO)-X}: An outcome $(S,x)$ is Pareto optimal within the set of outcomes satisfying property X if there exists no $(S',x')$ satisfying X such that $u_i(S')\geq u_i(S)$ for all $i\in N$ and  $u_i(S')> u_i(S)$ for some $i\in N$. Note that Pareto optimality is a property of the set of funded projects $S$ irrespective of the payments.
			\begin{itemize}
				\item PO is Pareto optimal among the set of all outcomes. 
				\item PO-IMP: PO among the set of IMP outcomes.
				\item PO-MR: PO among the set of MR outcomes.
			\end{itemize}
			\item \textbf{Payment constrained Pareto optimality (PO-Pay)}: An outcome is PO-Pay if it is not Pareto dominated by any outcome of at most the same price. Formally, there exists no $(S',x')$ such that $\sum_{i\in N}x_i'\leq \sum_{i\in N}x_i$, $u_i(S')\geq u_i(S)$ for all $i\in N$ and  $u_i(S')> u_i(S)$ for some $i\in N$.
			\item \textbf{Weak Payment constrained Pareto optimality (weak PO-Pay)}: An outcome is weakly PO-Pay if it is not Pareto dominated by any outcome of at most the same price. Formally, there exists no $(S',x')$ such that $x_i'\leq x_i$ and $u_i(S')\geq u_i(S)$ for all $i\in N$ and  $u_i(S')> u_i(S)$ for some $i\in N$. 
		\end{itemize}
	
		A concept that can be viewed in terms of participation, efficiency, and fairness is the adaptation of the principle of core stability for our setting. 
	
		\begin{itemize}
			\item \textbf{Core stability (CORE)}:
			There exists no set of agents who can pool in their budget and each gets a strictly better outcome. In other words, an outcome $(S)$ is CORE if for every subset of agents $N' \subseteq N$, for every subset of projects $C' \subseteq C$ such that $w(C') \leq \sum_{i \in N'} b_i$, the following holds for some agent $i \in N'$: $u_i(S)\geq w(C' \cap A_i).$
		\end{itemize}
	
	We describe a basic fairness axiom for outcomes and rules based on the idea of proportionality. 

		\begin{itemize}
			\item \textbf{Proportionality (PROP)}: Suppose a set of agents $N'\subseteq N$ \emph{only} approve of a set of projects $C'\subseteq C$ such that $\sum_{i\in N'}b_i\geq w(C')$. In that case, all the projects in $C'$ are selected.
		\end{itemize} 
				
	Finally, we consider an axiom that is defined for mechanisms rather than outcomes. We say that a mechanism satisfies \textbf{strategyproofness} if there exists no instance under which some agent has an incentive to misreport her preference relation.

		We conclude this section with some remarks on computation. The following proposition follows via a reduction from the Subset Sum problem. 

		\begin{proposition}\label{prop:comp}
		Even for one agent, computing an IR, PO, PO-MR, or PO-IMP outcome if NP-hard. 
		\end{proposition}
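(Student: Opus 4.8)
The plan is to reduce from the optimization version of \textsc{Subset Sum}. Given positive integers $a_1,\dots,a_n$ and a target $t$, I would build the PFC instance with a single agent in which $C=\{c_1,\dots,c_n\}$ with $w(c_i)=a_i$, the agent approves every project ($A_1=C$), and the budget is $b_1=t$. The first observation is that for one agent a feasible outcome is essentially just a choice of funded set $S$: the agent is charged $x_1=w(S)$, feasibility forces $w(S)\le b_1=t$, and the agent's utility is $u_1(S)=w(S\cap A_1)$.

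The core of the argument is to show that, under each of the four axioms, the returned outcome $(S,x)$ must satisfy $u_1(S)=\max\{\,w(S')\colon S'\subseteq A_1,\ w(S')\le b_1\,\}$, which for this instance is exactly $\max\{\,\sum_{i\in T}a_i\colon T\subseteq[n],\ \sum_{i\in T}a_i\le t\,\}$. For PO this is immediate: with a single agent, Pareto optimality just says that $u_1(S)$ is as large as possible among all feasible outcomes, and since replacing any feasible $S'$ by $S'\cap A_1$ changes neither its utility nor its feasibility, that maximum is the displayed one. For IR it is essentially the definition: every feasible outcome has $u_1(S)=w(S\cap A_1)\le\max_{S'\subseteq A_1,\,w(S')\le b_1}w(S')$, and IR demands the reverse inequality, so equality holds. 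For PO-MR and PO-IMP I would first check (using that all costs are positive) that the MR outcomes and the IMP outcomes both coincide with the outcomes having $S\subseteq A_1$ and $w(S)\le b_1$ --- for MR because $u_1(S)\ge x_1=w(S)$ forces $w(S\setminus A_1)=0$, and for IMP because the only agent must cover the full cost of each funded project and cannot pay for an unapproved one --- and then Pareto optimality within that class is again just maximization of $w(S)=u_1(S)$.

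To finish, note that a polynomial-time algorithm computing any one of these outcomes would, on the instance above, output an $S$ with $u_1(S)=\max\{\,\sum_{i\in T}a_i\colon \sum_{i\in T}a_i\le t\,\}$; this value equals $t$ if and only if the original \textsc{Subset Sum} instance is a yes-instance, and this comparison takes polynomial time. Hence computing an IR, PO, PO-MR, or PO-IMP outcome is NP-hard already for one agent.

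The \textsc{Subset Sum} reduction itself is routine; I expect the only thing needing care is the bookkeeping that links all four axioms to the same optimization problem --- in particular, pinning down exactly which single-agent outcomes are MR-outcomes and which are IMP-outcomes, and observing that the IR condition has this NP-hard maximization built into its very definition.
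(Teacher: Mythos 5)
Your reduction is exactly the paper's: a single agent who approves every project, with project costs equal to the Subset Sum item weights and budget equal to the target, so that an outcome satisfying any of IR, PO, PO-MR, or PO-IMP must realize the maximum-weight subset not exceeding the budget. The paper states the axiom-by-axiom correspondence in one sentence where you verify it explicitly, but the argument is the same and correct.
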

		\begin{proof}
		Consider the Subset Sum problem in which there is a set of items $M=\{1,\ldots, m\}$ with  corresponding weights $w_1,\ldots,w_m$, and a real value $W$. The problem is to find a subset $S$ with maximum weight $\sum_{j\in S}w_j$ such that $\sum_{j\in S}w_j\leq W$. The problem is well-known to be NP-hard. We reduce it our setting for a single agent by taking an item for each project that our agent approves of, and choosing the item weights to be the corresponding project costs. Then any set of projects $S$ satisfies the axioms in the proposition if and only if the corresponding set of items is the solution to the Subset Sum problem.
		\end{proof}
	
		Note that IMP is a property of an outcome not a set of projects. We say that a set of projects $S$ is IMP if there exists a feasible vector of charges to agents $x$ such that the outcome $(S,x)$ is IMP. The propoerty IMP can be tested in polynomial time via reduction to network flows. 
	 
		\begin{proposition}\label{prop:testimp}
		For a given set of projects $S$, checking whether there exists a vector of charges $x$ such that $(S,x)$ is implementable can be done in polynomial time.
		\end{proposition}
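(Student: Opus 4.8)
The plan is to reduce the question to a single maximum-flow computation. Given the PFC instance $(N,C,A,b,w)$ and the candidate set $S\subseteq C$, I would build a directed network with a source $s$, a sink $t$, one node for each agent $i\in N$, and one node for each project $c\in S$. I would add an arc $s\to i$ of capacity $b_i$ for every agent, an arc $i\to c$ of capacity $w(c)$ whenever $c\in S\cap A_i$, and an arc $c\to t$ of capacity $w(c)$ for every $c\in S$; no other arcs are present. This network has size polynomial in the input, and it will remain to compute a maximum $s$--$t$ flow and compare its value to $w(S):=\sum_{c\in S}w(c)$, answering ``yes'' exactly when the two are equal.

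The correctness argument has two directions. If $(S,x)$ is implementable via a witness $y:N\times C\rightarrow\mathbb{R}^+\cup\{0\}$, then routing flow $y(i,c)$ on arc $i\to c$ and flow $x_i=\sum_{c}y(i,c)$ on arc $s\to i$ gives a feasible flow: $x_i\le b_i$ is exactly feasibility of the outcome, $y(i,c)=0$ for $c\notin A_i$ means only existing arcs carry flow, and since each $c\in S$ is funded we have $\sum_i y(i,c)=w(c)$, so every arc into $t$ is saturated and the flow value is $w(S)$. Conversely, any flow of value $w(S)$ must saturate all arcs entering $t$ (their total capacity is precisely $w(S)$), so reading off $y(i,c)$ and $x_i$ from the flow on the corresponding arcs yields an implementability witness together with a feasible charge vector, where $y(i,c)=0$ for $c\notin S$ is consistent with those projects being unfunded. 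Note that no separate feasibility check for $S$ (such as $w(S)\le b(N)$) is needed, since in that case the max flow is automatically below $w(S)$.

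Finally, a maximum flow can be computed in strongly polynomial time (e.g.\ by Dinic's algorithm), which handles arbitrary nonnegative real capacities $b_i$ and $w(c)$; comparing its value with $w(S)$ then decides implementability of $S$ in polynomial time. I do not expect a genuine obstacle here: the only point requiring a moment's care is the observation that a flow attaining value $w(S)$ is forced to deliver exactly $w(c)$ to each $c\in S$ --- this is what links ``the max flow is large enough'' to ``every project in $S$ is fully funded'' --- but it is immediate from the capacities of the arcs into $t$. The proposition then follows from the standard polynomial-time solvability of maximum flow.
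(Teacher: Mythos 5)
Your proposal is correct and matches the paper's approach: the paper states the same bipartite payment constraints as a linear-program feasibility check and explicitly notes it can equivalently be tested via network flows, which is exactly the max-flow construction you spell out (source--agent arcs of capacity $b_i$, agent--project arcs restricted to approved projects in $S$, project--sink arcs of capacity $w(c)$, accept iff the max flow equals $w(S)$). Your write-up simply makes the flow reduction and its correctness argument explicit, which the paper leaves as a remark.
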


		
		Similarly, we can also check whether a particular outcome $(S, x)$ is implementable with a variation of the above linear program, where the upper bound on the sum of payments for each agent is $x_i$ instead of $b_i$.

		\section{Axioms: Compatibility and Logical Relations}

		In this section, we study the compatibility and relations between the axioms formulated. 

		\begin{remark}
		Note that IR and MR are incomparable. Any outcome in which an agent does not pay any money trivially satisfies MR. However, it may not satisfy IR. 
		On the other hand, an IR outcome may not be MR. Consider the case in which an agent's utility is at least as high as by funding alone. However, the agent may have been asked to pay more than the utility she gets which violates MR.
		\end{remark}

		Next, we point out that that PO-Pay is equivalent to weak PO-Pay.

		\begin{proposition}\label{po-paytoweak}
		PO-Pay is equivalent to weak PO-Pay.
		\end{proposition}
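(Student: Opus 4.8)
The plan is to prove the two implications separately. The direction from PO-Pay to weak PO-Pay is immediate: if $(S,x)$ fails weak PO-Pay, witnessed by an outcome $(S',x')$ with $x_i'\le x_i$ for all $i\in N$ together with the utility inequalities $u_i(S')\ge u_i(S)$ for all $i$ and $u_i(S')>u_i(S)$ for some $i$, then summing the payment inequalities over $i$ yields $\sum_{i\in N}x_i'\le\sum_{i\in N}x_i$, so the very same outcome $(S',x')$ already witnesses a failure of PO-Pay. Taking the contrapositive gives this implication, so the only real work is the converse.

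For the converse I would suppose $(S,x)$ is not PO-Pay, fix a witnessing outcome $(S',x')$ with $\sum_{i\in N}x_i'\le\sum_{i\in N}x_i$, $u_i(S')\ge u_i(S)$ for all $i$ and $u_i(S')>u_i(S)$ for some $i$, and convert it into a witness whose payment vector lies coordinatewise below $x$. The two facts I would use are: (i) in any feasible outcome the payments exactly cover the funded projects, so $\sum_{i\in N}x_i=w(S)$ and $\sum_{i\in N}x_i'=w(S')$, and therefore $w(S')\le w(S)$; and (ii) an agent's utility depends only on the funded set, not on how the charges are split among agents. I would then simply rescale the original payments: set $x_i''=\tfrac{w(S')}{w(S)}\,x_i$ for every $i\in N$. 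Since $w(S')\le w(S)$ this gives $0\le x_i''\le x_i\le b_i$, and $\sum_{i\in N}x_i''=w(S')$, so $(S',x'')$ is a feasible outcome; its funded set is still $S'$, hence $u_i(S')\ge u_i(S)$ for all $i$ with a strict inequality for some $i$, i.e.\ $(S',x'')$ witnesses that $(S,x)$ is not weak PO-Pay. Combining the two directions gives the equivalence.

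I do not expect any step to be a genuine obstacle; the only point to watch is the degenerate case $w(S)=0$, where the rescaling factor is undefined. But project costs are positive, so $w(S)=0$ forces $S=\varnothing$ and $x=\mathbf{0}$; then any dominating outcome also has zero total payment, hence funds $\varnothing$, and cannot strictly improve anyone's utility, so $(S,x)$ is vacuously PO-Pay and there is nothing to prove. At bottom the proposition is just the observation that in this model payment vectors are fungible across agents as long as they sum to the cost of the funded set and respect the individual budgets, which makes total-payment domination and coordinatewise domination coincide.
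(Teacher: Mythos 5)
Your proof is correct and follows essentially the same route as the paper: the trivial direction by summing the coordinatewise inequalities, and the converse by observing that payments are fungible so a total-payment witness can be converted into a coordinatewise one. You merely make explicit (via the proportional rescaling $x_i''=\frac{w(S')}{w(S)}x_i$ and the degenerate case $w(S)=0$) what the paper's proof compresses into the remark that $S'$ ``is still affordable if $x_i'\leq x_i$.''
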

		\begin{proof}
		Suppose an outcome $(S,x)$ is not weakly PO-Pay. Then, it is trivially not PO-Pay.
		Now suppose $(S,x)$ is not PO-Pay. Then, there exists another outcome $(S',x')$ such that $\sum_{i\in N}x_i'\leq \sum_{i\in N}x_i$, $u_i(S')\geq u_i(S)$ for all $i\in N$ and  $u_i(S')> u_i(S)$ for some $i\in N$. Note that  $S'$ can be funded with total amount $\sum_{i\in N}x_i'$ irrespective of who paid what. So $S'$ is still affordable if $x_i'\leq x_i$.
		\end{proof}
		The next proposition establishes further logical relations between the axioms.
	
		\begin{proposition}\label{Implications}
		The following logical relations hold between the properties. 
		\begin{enumerate}
			\item IMP implies MR.
			\label{IMP implies MR}
			\item PO implies PO-Pay.
			\item PO-$X$ implies PO-$Y$ if $Y$ implies $X$.
			\item PO-IMP implies EXH.
			\item PO-IR implies EXH.
			\item CORE implies IR.
			\item The combination of PO-IMP and IMP imply PROP.
		\end{enumerate}
		\end{proposition}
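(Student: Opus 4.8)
The plan is to verify the seven items one at a time; six of them are short structural observations, while the last needs a small construction. For item~\ref{IMP implies MR}, take the payment function $y$ witnessing implementability of $(S,x)$: whenever $y(i,c)>0$ the project $c$ receives a positive total and hence is funded (so $c\in S$) and lies in $A_i$, so $x_i=\sum_{c}y(i,c)=\sum_{c\in S\cap A_i}y(i,c)\le\sum_{c\in S\cap A_i}w(c)=u_i(S)$, using $y(i,c)\le\sum_{j\in N}y(j,c)=w(c)$ for the funded $c$. For item~6, instantiate the core condition with the singleton $N'=\{i\}$ and an arbitrary $C'\subseteq A_i$ with $w(C')\le b_i$: it forces $u_i(S)\ge w(C'\cap A_i)=w(C')$, and maximising over such $C'$ is exactly the IR bound. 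Items~2 and~3 are immediate from the definitions, since the outcomes of total payment at most $\sum_i x_i$ form a subclass of all outcomes, and the $Y$-outcomes form a subclass of the $X$-outcomes when $Y$ implies $X$; Pareto-optimality against a larger class is the stronger requirement.

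Items~4 and~5 are proved by contraposition from a violation of EXH. Suppose some $N'\subseteq N$ and $c\notin S$ satisfy $c\in\bigcap_{i\in N'}A_i$ and $w(c)\le\sum_{i\in N'}(b_i-x_i)$. Form $(S',x')$ with $S'=S\cup\{c\}$ by keeping all existing payments and splitting $w(c)$ among the agents of $N'$ out of their unspent budgets $b_i-x_i$ (possible by the assumed inequality). For item~4 one extends the witnessing payment function of the IMP outcome $(S,x)$ on the new pairs $(i,c)$, $i\in N'$, which keeps it implementable; for item~5 one notes $(S',x')$ is feasible and, as $S\subseteq S'$, utilities only increase, so the IR inequalities are preserved. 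Since $u_i(S')=u_i(S)+w(c)>u_i(S)$ for every $i\in N'$, the new outcome Pareto-dominates $(S,x)$ inside the respective class, contradicting PO-IMP (resp.\ PO-IR).

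For item~7, let $(S,x)$ be implementable and PO-IMP and suppose PROP fails: a coalition $N'$ whose members approve only projects in some $C'$ with $b(N')\ge w(C')$ leaves some $c^*\in C'\setminus S$ unfunded. In the witnessing payment function of $(S,x)$ all contributions of $N'$ go to projects of $C'$, so every funded project outside $C'$ is paid for entirely by $N\setminus N'$. I would then build an implementable outcome funding $S\cup C'$ by leaving the payments of $N\setminus N'$ to projects outside $C'$ untouched and re-routing the whole budget of $N'$ to cover every project of $C'$; such a routing exists because $N'$ has total budget at least $w(C')$ and only approves projects in $C'$, which is exactly the feasibility of the flow network of Proposition~\ref{prop:testimp} restricted to $N'$ and $C'$. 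The new outcome funds a superset of $S$ together with $c^*$, which every member of $N'$ approves, so it Pareto-dominates $(S,x)$ among implementable outcomes, contradicting PO-IMP.

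The only non-routine step is the re-routing in item~7: one must argue that $N'$ can simultaneously meet the funding targets of all projects in $C'$. I expect this to be the main obstacle, and I would handle it through the transportation/flow characterisation underlying Proposition~\ref{prop:testimp}, combining the budget inequality $b(N')\ge w(C')$ with the fact that in the original implementable outcome the money of $N'$ was already confined to $C'$; everything else reduces to unwinding the definitions.
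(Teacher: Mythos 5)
Your proof is correct and follows essentially the same route as the paper's: items 1--6 match the paper's arguments step for step (witnessing payment function for IMP $\Rightarrow$ MR, singleton coalitions for CORE $\Rightarrow$ IR, subset-of-outcomes reasoning for the PO variants, and the add-one-project contradiction for EXH). For item 7 the paper argues slightly more locally---under IMP the money of $N'$ stays within $C'$, so their unspent slack is at least $w(C')-w(C'\cap S)$ and funds one missing project, contradicting PO-IMP---whereas you re-route the whole budget of $N'$ to fund all of $C'$ at once; both rest on the same observation, and your flow-feasibility step is immediate once PROP is read (as the proposition requires) so that every agent in $N'$ approves all of $C'$.
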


	Next, we show that MR is compatible with PO-Pay.

		\begin{proposition}\label{prop:mrpo}
		Suppose an outcome is MR and there is no other MR outcome that Pareto dominates it. Then, it is PO-Pay.
		\end{proposition}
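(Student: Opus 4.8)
The plan is to argue by contraposition using the equivalence already established in Proposition~\ref{po-paytoweak}. Assume $(S,x)$ is MR, and suppose towards a contradiction that $(S,x)$ is \emph{not} PO-Pay. By Proposition~\ref{po-paytoweak}, $(S,x)$ is then also not weakly PO-Pay, so there exists a feasible outcome $(S',x')$ with $x_i' \leq x_i$ for all $i\in N$, $u_i(S') \geq u_i(S)$ for all $i\in N$, and $u_i(S') > u_i(S)$ for some $i\in N$.

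The key observation is that this dominating outcome $(S',x')$ is itself MR. Indeed, for every $i\in N$ we chain the inequalities $u_i(S') \geq u_i(S) \geq x_i \geq x_i'$, where the first holds because $(S',x')$ Pareto dominates $(S,x)$, the second because $(S,x)$ is MR, and the third by the choice of $x'$ in the weak PO-Pay witness. Feasibility of $(S',x')$ is inherited as well, since $x_i' \leq x_i \leq b_i$. Hence $(S',x')$ is an MR outcome.

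But then $(S',x')$ is an MR outcome that Pareto dominates $(S,x)$ (weakly in every coordinate, strictly in some), contradicting the hypothesis that no other MR outcome Pareto dominates $(S,x)$. Therefore $(S,x)$ must be PO-Pay.

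I do not expect any real obstacle here; the only points requiring a little care are (i) invoking Proposition~\ref{po-paytoweak} to pass to the coordinate-wise ($x_i' \le x_i$) form of domination, which is exactly what makes the inequality chain go through, and (ii) checking that the witness outcome remains feasible so that it genuinely counts as a competitor among MR outcomes. Both are immediate from the definitions.
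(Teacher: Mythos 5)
Your proof is correct and follows essentially the same route as the paper's: pass to a coordinate-wise dominating outcome with $x_i'\leq x_i$ (you cite Proposition~\ref{po-paytoweak}, the paper redoes the same rearrangement inline), then use the chain $u_i(S')\geq u_i(S)\geq x_i\geq x_i'$ to conclude the dominator is MR, contradicting PO within MR.
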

		\begin{proof}
		Suppose the outcome $(S,x)$ is MR and PO constrained to MR. 
		We claim that $(S,x)$ is PO-Pay. Suppose it is not PO-Pay. Then there exists another outcome $(S',x')$ such that $\sum_{i\in N}x_i'\leq \sum_{i\in N}x_i$, $u_i(S')\geq u_i(S)$ for all $i\in N$ and  $u_i(S')> u_i(S)$ for some $i\in N$. Note that $S'$ is afforadable with total amount $\sum_{i\in N}x_i'$ irrespective of who paid what. So $S'$ is still affordable if $x_i'\leq x_i$. Therefore, we can assume that $x_i'\leq x_i$ for all $i\in N$. 
		Note that since $S'$ Pareto dominates $S$ and since $(S,x)$ is MR, $u_i(S')\geq u_i(S)\geq x_i\geq x_i'$ for all $i\in N$. Hence $(S',x')$ also satisfies MR. Since $(S',x')$ is MR and since $S$ Pareto dominates $S'$, it contradicts the fact that $(S,x)$ PO constrained to MR.
		\end{proof}

		\begin{proposition}
		There always exists an outcome that satisfies IMP, IR, PO-IMP and hence also MR and EXH.
		\end{proposition}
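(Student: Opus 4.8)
The plan is to produce the outcome in two stages: first build one outcome that is simultaneously IMP and IR, and then ``climb'' within the class of IMP outcomes to a Pareto-optimal one, arguing that IR survives the climb; the conclusions MR and EXH then come for free from Proposition~\ref{Implications}.

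For the first stage, for each agent $i$ fix a bundle $S_i^\ast\subseteq A_i$ with $w(S_i^\ast)\le b_i$ attaining the solo value $\max_{S'\subseteq A_i,\,w(S')\le b_i}w(S')$. Process the agents in an arbitrary order $1,\dots,n$, maintaining a partially funded set $F$ with an assignment of payers. When agent $i$ is processed, let $F_{<i}$ be the set funded so far; charge agent $i$ for exactly the projects in $S_i^\ast\setminus F_{<i}$, i.e.\ set $y(i,c)=w(c)$ for each such $c$. This is feasible since the total charged is $w(S_i^\ast\setminus F_{<i})\le w(S_i^\ast)\le b_i$. After all agents are processed we obtain $(S,x)$ with $S=\bigcup_i S_i^\ast$. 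The outcome is IMP: every positive payment $y(i,c)$ has $c\in S_i^\ast\subseteq A_i$, each funded project receives exactly its cost (paid by a single designated agent, or by an earlier one), and $x_i\le b_i$. It is IR because once agent $i$ has been processed we have $S_i^\ast\subseteq F$, projects are never de-funded, and hence $u_i(S)\ge w(S_i^\ast)$.

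For the second stage, note that IR is monotone under Pareto improvement: if $(S',x')$ has $u_i(S')\ge u_i(S)$ for all $i$ and $(S,x)$ is IR, then $u_i(S')\ge w(S_i^\ast)$ for all $i$, so $(S',x')$ is IR as well. Starting from the IMP+IR outcome just built, repeatedly replace the current IMP outcome by any IMP outcome that Pareto dominates it. Each replacement strictly advances the utility profile in the Pareto order, and there are only finitely many achievable utility profiles (the profile depends only on the funded set $S\subseteq C$), so the chain terminates at an IMP outcome that no IMP outcome Pareto dominates --- a PO-IMP outcome --- which by the monotonicity observation is still IR. Finally, Proposition~\ref{Implications} gives IMP $\Rightarrow$ MR and PO-IMP $\Rightarrow$ EXH, so this outcome also satisfies MR and EXH.

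The only point requiring any care is the sequential payment assignment in the first stage: one must ensure no agent overspends, which is exactly why agent $i$ is charged only for $S_i^\ast\setminus F_{<i}$ (cost at most $w(S_i^\ast)\le b_i$) rather than for all of $S_i^\ast$, leaving projects already funded by earlier agents untouched. Everything else is routine bookkeeping.
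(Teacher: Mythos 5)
Your proof is correct and takes essentially the same route as the paper: fund the union of each agent's individually optimal bundle within budget (yielding an IMP and IR outcome), then Pareto-improve within the class of IMP outcomes, observing that IR survives Pareto improvements and invoking Proposition~\ref{Implications} for MR and EXH. Your sequential charging, where each project is paid in full by the first agent whose bundle contains it, is a cleaner piece of bookkeeping than the paper's scheme of charging every agent for her whole bundle and refunding each project's surplus, and you also make the termination of the improvement chain explicit, which the paper leaves implicit.
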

		\begin{proof}
		\textit{Existence of an outcome that satisfies IMP, IR, PO-IMP}: For each $i\in N$ compute $(S_i,y_i)$ that is an IR outcome. 
		This can be computed by finding a maximum total weight set of projects that has weight at most $b_i$.
		Then consider the outcome $(\bigcup_{i\in N}S_i,(y_1,\ldots, y_n))$.
		In such an outcome, we also keep track of which agent contributed to which project. Note that if $c\in S_i$, then $i$ contributed $w(c)$ to that project. Note that $w(\bigcup_{i\in N}S_i)\geq \sum_{i\in N}y_i$. If $w(\bigcup_{i\in N}S_i)> \sum_{i\in N}y_i$, we need to return $w(\bigcup_{i\in N}S_i) - \sum_{i\in N}y_i$, back to the agents to ensure that no more money is charged than needed to pay for $\bigcup_{i\in N}S_i$. We return the money as follows. 
		Recall that we know the amount paid by each agent to each project, i.e., agent $i$ paid $w(c)$ to project $c$ if and only if $c\in S_i$.
		Some projects may have received more money than needed. For each project $c$'s surplus, we uniformly allocate it among the agents who paid for it.
		Suppose the outcome satisfying IMP, IR and EXH does not satisfy PO-IMP. Then there exists another outcome that satisfies IMP that Pareto dominates the outcome. Such a Pareto improvement still satisfies IR because the utility of each agent is at least as high.
		\end{proof}
	
		Note that PO-Pay and IMP are both satisfied by an empty outcome with zero charges. PO-IMP and IMP are easily satisfied by computing a PO outcome from the set of IMP outcomes. PO-Pay and PO-IMP are easily satisfied by computing a PO outcome which may not necessarily satisfy IMP. 
	
		\begin{proposition}
		There always exists an outcome that satisfies MR, IR, PO-MR and hence also EXH.
		\end{proposition}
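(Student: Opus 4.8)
The plan is to first exhibit \emph{some} outcome that is simultaneously MR and IR, then ``push it up'' in the Pareto order within the class of MR outcomes to get one that is in addition PO-MR, and finally observe that PO-MR already entails EXH. So the work splits into a construction step, a maximisation step, and a short implication.

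\emph{Step 1: a starting MR and IR outcome.} For each agent $i\in N$ let $S_i$ be a maximum-weight subset of $A_i$ with $w(S_i)\le b_i$, and consider the provisional outcome in which every project in $S_i$ is paid for in full by agent $i$ alone, so $i$ is tentatively charged $w(S_i)$. Put $S^0=\bigcup_{i\in N}S_i$; then the total provisional charge $\sum_{i\in N}w(S_i)$ is at least $w(S^0)$, and exactly as in the proof of the preceding proposition we redistribute each project's surplus uniformly among the agents who paid for it, obtaining a feasible outcome $(S^0,x^0)$ with $x^0_i\le w(S_i)$ for every $i$. Since $S_i\subseteq S^0\cap A_i$ we get $u_i(S^0)\ge w(S_i)\ge x^0_i$, so $(S^0,x^0)$ is MR, and $u_i(S^0)\ge w(S_i)=\max_{S'\subseteq A_i,\,w(S')\le b_i}w(S')$, so it is IR.

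\emph{Step 2: maximising within MR outcomes.} The decisive observation is that the utility profile $(u_1(S),\dots,u_n(S))$ of an outcome depends only on the funded set $S\subseteq C$, hence only finitely many utility profiles are realised by MR outcomes. The set of such profiles that weakly Pareto-dominate $(u_1(S^0),\dots,u_n(S^0))$ is nonempty (it contains that of $(S^0,x^0)$) and finite, so it has a Pareto-maximal element; let $(S^*,x^*)$ be any MR outcome realising it. If an MR outcome Pareto-dominated $(S^*,x^*)$, its utility profile would weakly dominate that of $S^0$ and strictly dominate that of $S^*$, contradicting maximality; hence $(S^*,x^*)$ is PO-MR. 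And since $u_i(S^*)\ge u_i(S^0)$ for all $i$ while $(S^0,x^0)$ is IR, $(S^*,x^*)$ is IR too.

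\emph{Step 3 and the main obstacle.} It remains to note that PO-MR (together with MR) forces EXH: if there were $N'\subseteq N$ and $c\in C\setminus S^*$ with $c\in\bigcap_{i\in N'}A_i$ and $w(c)\le\sum_{i\in N'}(b_i-x^*_i)$, we could add $c$ to $S^*$ and split $w(c)$ among the agents of $N'$ within their leftover budgets; each $i\in N'$ then sees utility rise by exactly $w(c)$ and payment rise by at most $w(c)$, so MR is preserved, no utility decreases, and every $i\in N'$ is strictly better off, contradicting PO-MR. The only delicate point in the whole argument is Step 2: because payments are real-valued the outcome space is infinite and ``Pareto optimal among MR outcomes'' is not obviously attained, and the resolution is precisely the reduction to the finitely many utility profiles indexed by $S\subseteq C$.
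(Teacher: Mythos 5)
Your proof is correct and takes essentially the same route as the paper's: construct a starting MR and IR outcome from each agent's individually optimal bundle (the paper reuses its earlier IMP-and-IR construction together with IMP $\Rightarrow$ MR), then pass to a Pareto-maximal outcome within the MR class, note that IR is preserved under Pareto improvement, and derive EXH from PO-MR plus MR. If anything you are slightly more careful than the paper, which justifies attainment only by saying there cannot be infinitely many Pareto improvements ``because the budgets of the agents are finite,'' whereas you correctly ground it in the finitely many utility profiles indexed by funded sets, and you spell out the EXH implication that the paper leaves implicit.
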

		\begin{proof}
		\textit{Existence of an outcome that satisfies MR, IR, PO-MR}: From the proof of part \textit{(i)}, we know that an IMP and IR outcome always exists. Also, from Proposition~\ref{Implications} we know that every IMP outcome is MR, so there always exists an MR and IR outcome. 
		Now suppose the outcome satisfying MR and IR does not satisfy PO-MR. Then there exists another outcome satisfying MR that Pareto dominates the original outcome, which is still IR. There cannot exist an infinite number of Pareto improvements because the budgets of the agents are finite. Hence we can reach a PO-MR outcome that is also IR and MR.
		\end{proof}

		We note that if no agent can individually fund a project, then every outcome is IR. In crowdfunding settings in which projects have large costs, the IR requirement is often easily satisfied.

		\section{Aggregation Rules}
	
		In this section, we take a direct welfarist view to formalize rules that maximize some notion of welfare. We consider three notions of welfare: utilitarian, egalitarian, and Nash welfare; and we define the following rules.  
	

		\begin{itemize}
		\item UTIL: define the utilitarian welfare derived from an outcome $(S, x)$ as $ \sum_{i\in N} u_i(S).$ Then, UTIL returns an outcome that maximises the utilitarian welfare.
		\item EGAL: given some output $(S, x)$, write the sequence of agents' utilities from that outcome as $u(S) = (u_i(S))_{i\in N}$, where $u$ is sorted in non-decreasing order. Then, EGAL returns an outcome $(S, x)$ such that $u(S)$ is lexicographically maximal among the outcomes. 
		\item NASH: maximises the Nash welfare derived from an output $(S, x)$, i.e. $\prod_{i\in N} \left( u_i(S)  \right).$
		\end{itemize}

		\begin{proposition}
		UTIL, EGAL, and NASH satisfy PO and hence PO-MR, PO-IMP, PO-Pay, and EXH. 
		\end{proposition}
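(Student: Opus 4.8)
The plan is to show that each of the three rules maximises an objective that is strictly monotone under Pareto dominance, so that a Pareto improvement would contradict optimality, and then to read off the remaining properties from Proposition~\ref{Implications}. Concretely, I would first record the reduction: PO implies PO-Pay by part~2; PO implies PO-MR and PO-IMP by part~3 (taking the inner constraint to be the vacuous property satisfied by every outcome, which is implied both by MR and by IMP); and PO-IMP implies EXH by part~4. Hence it suffices to prove that UTIL, EGAL and NASH each return a Pareto optimal outcome. Throughout, suppose for contradiction that the rule returns $(S,x)$ and that some feasible $(S',x')$ Pareto dominates it, i.e.\ $u_i(S')\ge u_i(S)$ for all $i\in N$ with at least one strict inequality.

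For UTIL the contradiction is immediate: the strict inequality gives $\sum_{i\in N}u_i(S')>\sum_{i\in N}u_i(S)$, so $(S,x)$ did not maximise utilitarian welfare. For EGAL I would invoke the elementary fact that sorting preserves coordinatewise order: writing $u(S)^{\uparrow}$ for the non-decreasing rearrangement and using $u(S)^{\uparrow}_{k}=\min_{|T|=k}\max_{i\in T}u_i(S)$, coordinatewise dominance of $u(S')$ over $u(S)$ yields $u(S')^{\uparrow}\ge u(S)^{\uparrow}$ coordinatewise; since the entrywise sums differ, these sorted vectors are unequal, so $u(S')^{\uparrow}$ is lexicographically strictly larger than $u(S)^{\uparrow}$, contradicting lexicographic maximality of the EGAL outcome.

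For NASH the same idea works whenever the returned outcome has positive Nash welfare: then $u_i(S)>0$ for every $i$, so $u_i(S')\ge u_i(S)>0$ for all $i$, and the strict inequality for some agent gives $\prod_{i\in N}u_i(S')>\prod_{i\in N}u_i(S)$, a contradiction. The step I expect to be the real obstacle is the degenerate case in which every feasible outcome leaves some agent with zero utility, so that the product is identically zero and cannot by itself single out a Pareto optimal outcome — the empty outcome, for instance, is then an eligible maximiser yet need not be PO. The standard remedy, which I would adopt and state explicitly as part of the definition of NASH, is to interpret the Nash rule lexicographically: first maximise the number of agents receiving strictly positive utility, and only then maximise the product over those agents. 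Under this reading a Pareto improvement either strictly enlarges the set of positively served agents or, holding that set fixed, strictly increases the product on it, so the contradiction goes through exactly as before.
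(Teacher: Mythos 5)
Your proposal is correct and follows essentially the same route as the paper, whose proof is a one-line contradiction argument: a Pareto improvement would strictly increase the utilitarian/egalitarian/Nash objective, contradicting optimality, with the remaining properties read off from Proposition~\ref{Implications} exactly as you do. Your treatment is in fact more careful than the paper's on two points it glosses over --- the sorting argument showing coordinatewise dominance yields strict leximin improvement for EGAL, and the degenerate case where every outcome has zero Nash product, which the paper's claim of a ``strictly greater Nash welfare'' silently ignores and which your lexicographic reading of NASH repairs.
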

	
		One notes that the rules UTIL, EGAL, and NASH  do not satisfy minimal guarantees such as MR. The reason is that an agent may donate her budget to a widely approved project even though she may not approve any of such projects.
		Given that the existing aggregation rules do not provide us with guarantees that the outcomes they produce will satisfy our axioms, we can instead define rules such that optimize social welfare within certain subsets of feasible outcomes. For a property X, we can define UTIL-X, EGAL-X, and NASH-X as rules that maximise the utilitarian, egalitarian and Nash welfare respectively among only those outcomes that satisfy property X. 
	Next, we analyse the properties satisfied by rules EGAL/UTIL/NASH constrained to the set of MR or IMP outcomes. In the continuous model introduced by \citet{BBPSS20a}, there is no need to consider the rule NASH-IMP, as the NASH rule in the case where projects can be funded to an arbitrary degree (given there is sufficient budget) already satisfies IMP.

	Before we study the axiomatic properties, we note that 
	most meaningful axioms and rules are NP-hard to achieve or compute. The following proposition follows from Proposition~\ref{prop:comp}.

		\begin{proposition}
		Even for one agent, computing an UTIL, UTIL-MR, UTIL-MR, EGAL, EGAL-MR, EGAL-IMP, NASH, NASH-MR, NASH-IMP outcome is NP-hard.
		\end{proposition}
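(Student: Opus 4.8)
The plan is to reduce from Subset Sum exactly as in Proposition~\ref{prop:comp}, exploiting the fact that with a single agent every welfare notion we consider degenerates to maximising that agent's utility. Recall that in the reduction of Proposition~\ref{prop:comp} we build a one-agent instance with $|N|=1$, one project per item, project costs equal to item weights, $b_1 = W$, and the single agent approving \emph{all} projects. For such an instance, for any funded set $S$ we have $u_1(S) = \sum_{c\in S\cap A_1} w(c) = w(S)$, and feasibility forces $w(S)\le x_1 \le b_1 = W$; so the feasible funded sets are precisely the subsets of items of total weight at most $W$, and maximising $u_1$ over them is exactly the Subset Sum optimisation problem.

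First I would observe that, for a single agent, the utilitarian welfare $\sum_{i\in N} u_i(S)$, the egalitarian (leximin) objective, and the Nash welfare $\prod_{i\in N} u_i(S)$ all coincide, as functions to be maximised, with $u_1(S)$; hence on the instance above, UTIL, EGAL and NASH each return an outcome whose funded set solves Subset Sum. For NASH one should note the degenerate case where the optimum is $0$: we may assume $W\ge \min_j w_j$ in the Subset Sum instance, since otherwise it is trivial, and then the maximum of $u_1$ is strictly positive, so the Nash-optimal funded set again maximises $u_1$.

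Next I would handle the MR- and IMP-constrained variants. The key point is that both constraints are vacuous on this instance: for any feasible $S$ take the outcome $(S,x)$ with $x_1 = w(S)$; then $u_1(S) = w(S) = x_1$, so MR holds, and charging $w(c)$ to the single agent for each $c\in S$ (all of which are approved) witnesses IMP. Thus every feasible funded set $S$ with $w(S)\le W$ extends to both an MR outcome and an IMP outcome, so the attainable utility values over the MR outcomes and over the IMP outcomes coincide with those over all feasible outcomes. Consequently UTIL-MR, UTIL-IMP, EGAL-MR, EGAL-IMP, NASH-MR and NASH-IMP all also return a funded set that solves Subset Sum, and the claimed NP-hardness follows.

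The bulk of this is routine book-keeping; the only points requiring a little care are the two flagged above, namely ruling out the degenerate zero-welfare case for the Nash rule, and checking that introducing the MR/IMP constraints does not shrink the attainable welfare. Both become immediate once one notes that, with a single agent who approves everything, paying exactly the cost of the funded set is simultaneously minimal-return-tight and implementable.
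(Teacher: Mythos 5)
Your proposal is correct and follows essentially the same route as the paper, which derives this proposition directly from the Subset Sum reduction of Proposition~\ref{prop:comp}; you simply spell out the single-agent degeneracies (coincidence of UTIL/EGAL/NASH objectives, vacuity of MR and IMP, the zero Nash-welfare corner case) that the paper leaves implicit.
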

	
	Similarly, the following  proposition follows from Proposition~\ref{prop:mrpo}.

		\begin{proposition}
		UTIL-MR, EGAL-MR, and NASH-MR satisfy PO-Pay.
		\end{proposition}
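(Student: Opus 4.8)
The plan is to reduce the statement directly to Proposition~\ref{prop:mrpo}, which says that any MR outcome that is not Pareto dominated by another MR outcome is automatically PO-Pay. Hence it suffices to show, for each of the three rules, that its output is (i) an MR outcome and (ii) not Pareto dominated by any other MR outcome. Part (i) is immediate: by definition UTIL-MR, EGAL-MR, and NASH-MR optimize their respective welfare objectives \emph{only over the set of MR outcomes}, and this set is nonempty (e.g.\ by the existence results already proved), so each rule returns an MR outcome and the optimization is well-posed.

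For part (ii), I would argue by contradiction. Suppose the output $(S,x)$ of UTIL-MR is Pareto dominated by some MR outcome $(S',x')$, so $u_i(S')\ge u_i(S)$ for all $i\in N$ with strict inequality for at least one agent. Then $\sum_{i\in N}u_i(S') > \sum_{i\in N}u_i(S)$, contradicting that $(S,x)$ has maximum utilitarian welfare among MR outcomes. The argument for EGAL-MR is identical with ``utilitarian welfare'' replaced by ``the non-decreasingly sorted utility vector compared lexicographically'': a Pareto improvement weakly increases every coordinate and strictly increases at least one, making the sorted vector lexicographically strictly larger, again a contradiction. For NASH-MR the same idea applies to the product $\prod_{i\in N}u_i(S)$, contradicting optimality.

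The one step that warrants care is NASH-MR when some agent has utility $0$, since a literal product objective is insensitive to Pareto improvements that leave a coordinate at $0$. I would resolve this with the standard convention (the same one implicitly needed for the earlier claim that NASH satisfies PO): first prioritize the number of agents with strictly positive utility, then maximize the product among those. A Pareto improvement can only enlarge the set of agents with positive utility, and if it does not enlarge it then it strictly increases the product; either way we contradict optimality of $(S,x)$. With part (ii) established for all three rules, Proposition~\ref{prop:mrpo} gives PO-Pay and the proof is complete.
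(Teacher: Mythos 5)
Your proof is correct and follows essentially the paper's own route: the paper simply asserts that the proposition follows from Proposition~\ref{prop:mrpo}, and your argument supplies exactly the missing details, namely that each rule optimizes over the (nonempty) set of MR outcomes and that welfare maximization there yields an outcome that is Pareto optimal within MR, so Proposition~\ref{prop:mrpo} applies. Your explicit treatment of zero utilities for NASH-MR (prioritizing the set of agents with positive utility before maximizing the product) is a refinement the paper leaves implicit; with the bare product objective, ties at Nash welfare zero could otherwise let the rule select an MR outcome that is Pareto dominated at the same total price, so this convention is genuinely needed for the claim as stated.
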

		%
		
	From Proposition~\ref{prop:mrpo}, it follows that UTIL-MR, EGAL-MR, and NASH-MR satisfy PO-Pay. In contrast, we show that UTIL-IMP, EGAL-IMP, and NASH-IMP do not satisfy PO-Pay. In order to show this, we prove that it is possible in some instances for the set of jointly IMP and PO-IMP outcomes to be disjoint from the set of PO-Pay outcomes. 

		\begin{proposition}\label{prop:po-pay}
		UTIL-IMP, EGAL-IMP and NASH-IMP do not satisfy PO-Pay. In fact it is possible that no IMP and PO-IMP outcome satisfies PO-Pay.
		\end{proposition}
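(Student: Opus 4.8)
The plan is to exhibit a single PFC instance in which every outcome that is simultaneously IMP and PO-IMP is Pareto-dominated by some outcome of no larger total payment, and then to observe that the outputs of UTIL-IMP, EGAL-IMP and NASH-IMP are always IMP and PO-IMP (a welfare-maximiser among IMP outcomes is Pareto-optimal within IMP outcomes), so all three rules violate PO-Pay on that instance. The instance will be organised around one comparatively expensive, broadly approved project $c$ whose set of approvers has combined budget strictly below $w(c)$, together with a handful of cheap ``peripheral'' projects and, crucially, at least one agent who does not approve $c$ but has slack budget. Since no IMP outcome can fund $c$ (in an IMP outcome the entire cost $w(c)$ would have to be supplied by agents who approve $c$, which is impossible by the budget choice), the IMP-feasible sets are exactly those assembled from the peripheral projects subject to the cost/budget bookkeeping.

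The steps would be: (1) pin down the numbers so that $c$ is not implementable; (2) enumerate the IMP-feasible sets and extract the Pareto-maximal ones — these are precisely the IMP PO-IMP outcomes; (3) for each such outcome $(S,x)$, display an outcome $(S',x')$ with $c\in S'$, where the shortfall $w(c)-\sum_{i\in\text{approvers}(c)}b_i$ is routed to $c$ from a non-approver, padded with just enough peripheral projects that $u_i(S')\ge u_i(S)$ for all $i$ with strict inequality for at least one $i$; (4) check $\sum_i x_i'=w(S')\le w(S)=\sum_i x_i$ (here I use that an IMP outcome, and indeed any outcome, charges the agents exactly the total cost of the funded set); and (5) note that $S'$ is automatically not IMP-implementable because $c\in S'$, which is consistent with $(S,x)$ being PO-IMP. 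Assembling the cases gives that the set of IMP $\cap$ PO-IMP outcomes is disjoint from the set of PO-Pay outcomes, which is exactly the ``in fact'' claim, and the statement about the three rules follows.

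I expect the hard part to be step (1)–(3) done \emph{simultaneously}: the budgets and costs must be tuned so that (i) $c$ is non-implementable, (ii) the IMP PO-IMP outcomes are nevertheless expensive enough that funding $c$ — which adds $w(c)$ to the bill — still does not raise the total payment, so the ``extra'' $w(c)$ must be offset by peripheral projects that $S'$ can drop, and (iii) no implementation-respecting reallocation of the peripheral projects (e.g.\ consolidating several single-agent projects into a shared one) Pareto-improves on the intended IMP PO-IMP outcomes — otherwise those outcomes would not actually be PO-IMP. Reconciling (ii) and (iii) is the subtle point: one needs the peripheral structure to be ``wasteful'' for some agents in a way that only the extra coordination available to non-IMP outcomes (a non-approver subsidising $c$) can exploit, while being non-removable by any IMP payment vector. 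This is where essentially all of the care in the proof goes; once a valid instance is fixed, steps (2)–(5) are just finite case checks.
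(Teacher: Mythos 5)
Your plan is exactly the paper's strategy: build one instance around an expensive, widely approved project whose own supporters cannot jointly afford it (so no IMP outcome funds it), observe that UTIL-IMP, EGAL-IMP and NASH-IMP must then select among the remaining implementable sets, and beat the selected outcome with a non-IMP outcome of no larger total payment in which a non-approver's money is routed to the expensive project. The genuine gap is that you never exhibit the instance. For a statement of this kind the concrete numbers \emph{are} the proof: you yourself say that reconciling your conditions (i)--(iii) is where essentially all of the care goes, but you only assert that such a tuning exists, and steps (2)--(5) cannot even begin without it. For comparison, the paper's instance fits your template: five agents with budgets $4,1,1,5,3$, projects $A(7),B(4),C(3),D(7),E(7)$, and approval sets $\{B,E\},\{A,E\},\{A,E\},\{A,D\},\{C,D\}$. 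Here $E$ plays the role of your project $c$ (its approvers have total budget $6<7$), every IMP outcome funds a subset of $\{A,B,C\}$ or of $\{B,D\}$, the utility vectors are $(4,7,7,7,3)$ for $\{A,B,C\}$, $(4,0,0,7,7)$ for $\{B,D\}$ and $(7,7,7,7,7)$ for $\{D,E\}$, so all three rules select $\{A,B,C\}$ at total payment $14$, which is Pareto dominated by the (non-IMP) outcome funding $\{D,E\}$ at the same price $14$.

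A second caution concerns your step (3), which targets the stronger ``disjointness'' sentence: you must dispose of \emph{every} IMP and PO-IMP outcome, not just the rules' output, and this is more delicate than your sketch suggests. Even in the paper's own instance, the IMP outcome funding $\{B,D\}$ (total payment $11$) is PO-IMP, and no outcome of price at most $11$ appears to Pareto dominate it (any dominating set would need $D$ for agent 5 and $B$ or $E$ for agent 1, and $D+E$ costs $14$), so that instance arguably establishes only the first sentence of the proposition. If you want the full disjointness claim, your instance must be engineered so that every Pareto-maximal implementable set --- not only the welfare-optimal one --- is beaten at equal or lower total payment, and that extra requirement is precisely the kind of constraint your conditions (ii) and (iii) do not yet pin down.
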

		\begin{proof}
		Consider the following instance in Table~\ref{table: NASH-IMP PO-Pay Profile}.
		\begin{table}[h]
		\centering
		\begin{tabular}{lcccccccc} \hline
		            & & A (7) & B (4) & C (3) & D (7) & E (7)	 \\ \hline
					& Budget	&  	&  	&  	& 	& 	\\ \hline
		    Agent 1 	& 4 	    	&  	& + 	& 	&   	& + 	\\ \hline
		    Agent 2 	& 1	   	& + 	&   	&   	&   	& + 	\\ \hline
		    Agent 3 	& 1     	& + 	&   	&  	&   	& + 	\\ \hline
		    Agent 4 	& 5     	& + 	&   	&   	& + 	&   	\\ \hline
		    Agent 5 	& 3     	&   	&   	& + 	& + 	&   	\\ \hline
		\end{tabular}
		\caption{Example instance for proof of UTIL/EGAL/NASH-IMP not satisfying PO-Pay.}
		\label{table: NASH-IMP PO-Pay Profile}
		\end{table}	

		\begin{claim}
		Observe that no implementable outcome can fund project $E$ since it is too expensive to be funded solely by its supporters. 
		\end{claim}
	
		\begin{claim}
		Any implementable outcome funds a subset of the following project sets: $\{A, B, C\}, \{B, D\}$. Note that for an implementable outcome, if $D$ is funded, then only $B$ can also be funded (there is not enough money for agents who approve of $A$ or $C$ to fund these projects after funding $D$).
		\end{claim}
		 \vspace{-2em}
		\begin{table}[h!]
		\centering
		\begin{tabular}{lccccccc} \hline
		           	& A,B,C \quad & B,D \quad & D,E (not IMP)	\\ \hline
		    Agent 1 	& 4     	& 4     	& 7     	\\ \hline
		    Agent 2	& 7     	& 0     	& 7     	\\ \hline
		    Agent 3 	& 7    	& 0     	& 7     	\\ \hline
		    Agent 4	& 7     	& 7    	& 7     	\\ \hline
		    Agent 5 	& 3     	& 7     	& 7     	\\ \hline
		\end{tabular}
		\caption{Utilities provided to each agent by outcomes that fund the project sets $\{A, B, C\}, \{B, D\}$ and $\{D, E\} $.}
		\label{table: IMP PO-Pay Utilities}
		\end{table}
		\vspace{-2em}
		Note that when we are looking for the optimal outcome under a certain rule, we can ignore those project sets that are subsets of other project sets. Then, from Table \ref{table: IMP PO-Pay Utilities}, we see that the unique UTIL-IMP, EGAL-IMP and NASH-IMP outcome is the outcome that funds $\{A, B, C\}$ where each agent sends all their money to the only project of those three that they approve of. But clearly, this outcome is Pareto dominated by any outcome that funds $\{D, E\}$, which has the same total cost. Thus we have shown that UTIL-IMP, EGAL-IMP and NASH-IMP do not satisfy PO-Pay in general. In fact the set of IMP and PO-IMP outcomes can be disjoint from the set of PO-Pay outcomes. 
		\end{proof}
	
		The most striking aspect of Proposition~\ref{prop:po-pay} is that in the continuous domain without project costs, NASH-IMP is equivalent to NASH and the rule satisfies Pareto optimality and hence PO-Pay. In our context, NASH-IMP fails to satisfy PO-Pay.

		We also considered the issue of strategyproofness and found examples that show that none of the UTIL/EGAL/NASH rules are strategyproof whether they are unconstrained or constrained to MR or IMP outcomes. In contrast, there are several natural rules such as UTIL that are strategyproof in the continuous setting as well in the multi-winner voting setting. 
		
		\begin{proposition}
		UTIL, UTIL-MR, UTIL-IMP, NASH, NASH-MR and NASH-IMP are not strategyproof.
		\end{proposition}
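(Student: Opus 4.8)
The plan is to refute strategyproofness in the only way available for a statement of this form: for each of the six rules, exhibit a PFC instance together with an agent who strictly gains by misreporting her approval set. A rule fails strategyproofness as soon as one witnessing instance exists, so the whole task reduces to a careful construction of a handful of small instances; the bulk of the work is not a proof technique but the calibration of costs, budgets and ballots.

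The manipulation I would use throughout is \emph{approval withdrawal}. Fix a manipulator $i$ whose true approval set contains a ``cheap, popular'' project $A$ and an ``expensive, niche'' project $B$ with $w(B)>w(A)$, and arrange budgets so that no feasible outcome can fund both $A$ and $B$. Calibrate the remaining agents so that under truthful reporting the welfare-optimal outcome (utilitarian, respectively Nash) funds $A$ — because $A$ has many supporters, the objective prefers it — while if $i$ drops her approval of $A$, the contribution of $A$ to the objective falls by exactly enough that funding $B$ becomes the unique optimum. Since $i$ still truly approves $B$, her realised utility rises from $w(A)$ to $w(B)$. The same instance covers the constrained variants UTIL-MR, UTIL-IMP, NASH-MR, NASH-IMP provided every candidate outcome in play is implementable (hence, by Proposition~\ref{Implications}, MR), which I would guarantee by choosing the ballots so that in each of these outcomes every funded project is paid for exactly by its own supporters within their budgets.

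Concretely I would proceed in this order: (1) choose the costs of $A$, $B$ and the ballot structure around them; (2) add ``filler'' agents and tune their number and budgets so that both inequalities ``$A$ is optimal under truthful play'' and ``$B$ is optimal under $i$'s misreport'' hold — for UTIL these are linear in the supporter counts, for NASH they are products, so the two families generally require separately tuned numbers; (3) check feasibility and verify that every outcome compared is IMP, so that the MR- and IMP-constrained rules are witnessed by the same instances; (4) confirm the manipulation is strictly profitable, i.e.\ $w(B)>w(A)$ and that no report of $i$ yields an outcome at least as good, in particular that the misreport does not make some third outcome optimal that is worse for $i$.

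The main obstacle I anticipate is controlling \emph{uniqueness} of the optimum so that the argument does not rely on an unspecified tie-breaking rule: dropping $i$'s approval must flip a strict inequality, not merely create a tie, which pins down the allowed slack between the objective values of ``fund $A$'' and ``fund $B$''. A secondary difficulty is specific to NASH: for the product $\prod_{i\in N}u_i(S)$ to be meaningful every agent must receive strictly positive utility under every candidate outcome, which forces each filler agent to approve some project funded in all the compared outcomes and thereby further constrains the ballots — in practice this means the UTIL and NASH witnesses are most cleanly presented as two (or three) distinct instances rather than a single one.
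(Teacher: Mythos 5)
Your plan is essentially the paper's argument: an approval-withdrawal manipulation in which an agent drops a cheap, widely approved project so that the (implementable, hence MR/IMP-feasible) welfare-optimal outcome flips to funding a more expensive project she approves, together with your ``filler'' trick of a universally approved project to keep Nash welfare positive — the paper realises exactly this template with a single instance (projects $X(10)$, $Y(4)$, $Z(9)$; budgets $8,1,10$; Agent 3 drops $Y$, moving the outcome from $\{Y,Z\}$ to $\{X,Z\}$ and her utility from $13$ to $19$), which covers all six rules at once rather than needing separate UTIL and NASH witnesses. The only thing missing from your proposal is the concrete calibration itself, but the constraints you identify (strict optimality before and after the misreport, implementability of the compared outcomes, positive Nash utilities) are precisely the ones the paper's instance satisfies.
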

		\begin{proof}
		Consider the instance given in Table \ref{tab: UTIL not SP}. Note that we only include project $Z$ for the purpose of making the NASH welfare of outcomes non-zero. Now, observe that it is impossible to fund all three projects, so our possible candidate project sets to be funded by the above rules are those where two projects get funded.
	
			\begin{table}[h!]
			\centering
			\begin{tabular}{lccccc} \hline
			            	& 		& X (10) 	& Y (4) & Z (9)	\\ \hline
						& Budget	&  		& 		&		\\ \hline
			    Agent 1 	& 8 		&      	& +     	& +		\\ \hline
			    Agent 2 	& 1	 	&       	& +   	& +		\\ \hline
			    Agent 3 	& 10		& + 		& +		& +		\\ \hline
			\end{tabular}
			\caption{Example instance where rules are not strategyproof.}
			\label{tab: UTIL not SP}
			\end{table}
	\vspace{-2em}
			\begin{table}[h!]
			\centering
			\begin{tabular}{lccccc} \hline
			            	& $\{X, Y\}$	& $\{X, Z\}$	& $\{Y, Z\}$		\\ \hline
		Utilitarian Welfare 	& 22			& 37	    		& 39    			\\ \hline
		Nash Welfare    	& 224		& 1539	 	& 2197			\\ \hline
			\end{tabular}
			\caption{Utilitarian and Nash welfares of certain project sets to be funded.}
			\label{tab: Performance of subsets against rules - SP}
			\end{table}
	
			We check that there is an implementable outcome that funds $\{Y, Z\}$, and find that the outcome where Agents 1 and 2 pay for $Z$ and Agent 3 pays for $Y$ is implementable. Hence, $\{Y, Z\}$ is the result of UTIL, UTIL-MR, UTIL-IMP, NASH, NASH-MR, NASH-IMP. Note that the utility for Agent 3 is 13.
	
			Now, suppose Agent 3 were to misrepresent her preferences as in Table \ref{tab: UTIL not SP - misrep}. Again, according to this new (perceived) instance, it is impossible for all projects to be funded, so in Table \ref{tab: Performance of subsets against rules - misrep - SP} we check the welfares produced by funding any two of the projects.
	
			\begin{table}[h!]
			\centering
			\begin{tabular}{lccccc} \hline
			            	& 		& X (10) 	& Y (4) & Z (9)	\\ \hline
						& Budget	&  		& 		&		\\ \hline
			    Agent 1 	& 8 		&      	& +     	& +		\\ \hline
			    Agent 2 	& 1	 	&       	& +   	& +		\\ \hline
			    Agent 3 	& 10		& + 		& 		& +
			\end{tabular}
			\caption{Instance where Agent 3 is misrepresenting her preferences.}
			\label{tab: UTIL not SP - misrep}
			\end{table}
	
			\begin{table}[h!]
			\centering
			\begin{tabular}{lccccc} \hline
			            	& $\{X, Y\}$	& $\{X, Z\}$	& $\{Y, Z\}$		\\ \hline
			    UTIL 	& 18			& 37	    		& 35    			\\ \hline
			    NASH    	& 160		& 1539	 	& 1521			\\ \hline
			\end{tabular}
			\caption{Perceived welfares of certain project sets to be funded if Agent 3 misrepresents her preferences.}
			\label{tab: Performance of subsets against rules - misrep - SP}
			\end{table}

			Since $\{X, Z\}$ can be funded by an implementable outcome where Agents 1 and 2 paying for $Z$ and Agent 3 paying for $X$, $\{X, Z\}$ is the result of UTIL, UTIL-MR, UTIL-IMP, NASH, NASH-MR, NASH-IMP. With this outcome, Agent 3 sees her utility rise to 19.

			Then, by misrepresenting her preferences, Agent 3 can cause the choice of the aforementioned rules to change from funding $\{Y, Z\}$ to funding $\{X, Z\}$, hence increasing her own utility.	Therefore, UTIL, UTIL-MR, UTIL-IMP, NASH, NASH-MR and NASH-IMP are not strategyproof.
	
		\end{proof}
		
		Similarly, the following also holds.

		\begin{proposition}\label{prop:egalnotsp}
		EGAL, EGAL-MR and EGAL-IMP are not strategyproof.
		\end{proposition}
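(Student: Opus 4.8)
The plan is to reuse the instance of Table~\ref{tab: UTIL not SP} (three agents with budgets $8,1,10$ and projects $X,Y,Z$ of costs $10,4,9$), but with a \emph{different} profitable deviation for Agent~3 than the one used for UTIL and NASH. First I would pin down the truthful outcome. Since the total budget is $19<23=w(X)+w(Y)+w(Z)$, at most two projects can be funded; Agent~2's budget is only $1$, so for her to obtain utility $13$ both $Y$ and $Z$ must be funded, and once both are funded every agent gets utility $w(Y)+w(Z)=13$. Hence $\{Y,Z\}$ is the unique feasible set whose sorted utility profile is $(13,13,13)$, and every other feasible set has a strictly smaller minimum; so $\{Y,Z\}$ is leximin-maximal among \emph{all} outcomes. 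As $\{Y,Z\}$ is implementable and admits an MR payment vector (let everyone pay only towards $Y$ and $Z$), it is the outcome returned by EGAL, EGAL-MR and EGAL-IMP alike, and Agent~3's utility there is $13$.

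Next I would let Agent~3 report $A_3'=\{X,Y\}$, i.e.\ drop $Z$ from her approval set (in contrast with the UTIL/NASH manipulation, where she drops $Y$). In the perceived instance I would show that $\{X,Z\}$ is the unique feasible set whose sorted profile attains minimum $9$: Agent~2 still needs $Z$ funded, and Agent~3 now perceives $Y$ as worth only $4<9$, so she needs $X$ funded; funding $X,Y,Z$ together remains infeasible. Thus $\{X,Z\}$ is leximin-maximal among all outcomes of the perceived instance; it is implementable with $x=(8,1,10)$ (Agent~3 alone funds $X$, which only she approves under the report, and Agents~1,2 fund $Z$), and this payment vector is MR there. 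So EGAL, EGAL-MR and EGAL-IMP all select $\{X,Z\}$, and Agent~3's \emph{true} utility becomes $w(X)+w(Z)=19>13$ — a beneficial misreport for each of the three rules.

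The step I expect to be the main obstacle is finding the right deviation in the first place: because leximin compares whole sorted profiles rather than a single scalar, the deviation that works for the additive and multiplicative rules fails here. If Agent~3 drops $Y$ instead of $Z$, the perceived profiles of $\{Y,Z\}$ and $\{X,Z\}$ become $(9,13,13)$ and $(9,9,19)$, which leximin still breaks in favour of $\{Y,Z\}$, so no manipulation results. One therefore has to notice that dropping $Z$ is what degrades the manipulator's contribution to the ``fair'' set $\{Y,Z\}$ while keeping $X$ viable, and then check — for each of EGAL, EGAL-MR and EGAL-IMP separately — that no other feasible (resp.\ IMP, resp.\ MR) outcome ties the claimed leximin optimum in either the true or the perceived instance. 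That verification is a short case check driven by the tight total budget $19$ and Agent~2's budget of $1$.
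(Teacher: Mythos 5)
Your proof is correct, and the verification goes through: under truthful reports $\{Y,Z\}$ is the unique leximin optimum (profile $(13,13,13)$, every other feasible set leaves Agent 1 or 2 at most $9$), it is implementable and MR, and under the misreport $A_3'=\{X,Y\}$ the set $\{X,Z\}$ uniquely attains the maximum minimum of $9$ in the perceived instance, is implementable and MR with payments $(8,1,10)$, and raises Agent 3's true utility from $13$ to $19$. The method is the same as the paper's --- exhibit an explicit instance and a profitable misreport, then enumerate the few feasible project sets --- but the construction differs: the paper introduces a separate, smaller instance (budgets $1,1,3$; costs $3,2,1$) in which Agent 3 drops the middle-priced project $Y$, whereas you reuse the instance from the UTIL/NASH non-strategyproofness proof and identify a different deviation (dropping $Z$ rather than $Y$). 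Your observation that the UTIL/NASH deviation fails for leximin (since $(9,13,13)$ lexicographically beats $(9,9,19)$) is exactly the reason the paper switches to a fresh example; your route buys a single counterexample instance serving all the welfarist rules at the cost of a slightly longer case check, while the paper's buys a smaller self-contained example.
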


		Table~\ref{table:axioms} shows the axioms that are satisfied by restricting the aggregation rules to optimising within the space of MR or IMP outcomes. \\
		\begin{table*}
		\centering
		\label{tab:compare} 
		\scalebox{0.7}{
		\begin{tabular}{lcccccccccccccc}
		\toprule
		& UTIL-MR & EGAL-MR & NASH-MR & UTIL-IMP & EGAL-IMP & NASH-IMP \\ \midrule
		MR		& \checkmark & \checkmark & \checkmark & \checkmark & \checkmark & \checkmark \\
		IMP		&--	&--	&-- & \checkmark	& \checkmark & \checkmark \\
		PROP		&--	&--	&-- & \checkmark	& \checkmark & \checkmark \\
		IR 		&--	&--	&-- &--	&--	&--	\\
		PO 		&--	&--	&-- &--	&--	&-- \\
		PO-MR	& \checkmark & \checkmark & \checkmark &--	&--	&-- \\
		PO-IMP	& \checkmark & \checkmark & \checkmark &\checkmark & \checkmark	& \checkmark \\
		PO-Pay	& \checkmark &\checkmark&\checkmark&--&--&--& \\
		EXH 		& \checkmark &\checkmark	&\checkmark & \checkmark	& \checkmark & \checkmark \\
		CORE 	&--	&--	&--	&--	&--	&-- \\
		SP		&--	&--	&--	&--	&--	&-- \\
		\bottomrule
		\end{tabular}%
		}
		\caption{Properties satisfied by UTIL-MR, EGAL-MR, NASH-MR, UTIL-IMP, EGAL-IMP and NASH-IMP.}
		\label{table:axioms}
		\end{table*}
	
		\section{Experiment}

		In addition to the axiomatic study of the welfare-based rules, we undertake a simulation-based experiment to gauge the performance of different rules with respect to utilitarian and egalitarian welfare. Our study shows the impact of fairness axioms such as MR and IMP on welfare. 

		We generate random samples of profiles in order to simulate two potential real-world applications of PFC.

		\begin{enumerate}
		\item Share-house setting: In this example, we can imagine a group of house-mates pooling their resources to fund communal items for their house. We operate under the following assumptions:
		\begin{itemize}
			\item Number of agents from 3-6: this represents a reasonable number of house-mates in a share-house.
			\item Number of projects from 5-12: projects may include buying items such as tables, chairs, sofas, televisions, lights, kitchen appliances, washing machines, dryers, etc.
			\item Agent budgets from 300-600 and project costs from 50-1000. We base these costs on typical rent and furniture costs in Australia as well as costs of the above items in first and second-hand retailers. We expect that each agent brings some money to the communal budget, and would spend around one or two weeks' worth of rent on one-time communal expenses.
		\end{itemize}
			
		\item Crowdfunding setting: In this example, we imagine a relatively small number of expensive projects to be funded, and a large number of philanthropic donors, and make the following assumptions.
		\begin{itemize}
			\item Number of agents from 20-50: A review of crowdfunding websites such as Kickstarter and GoFundMe shows that the most promoted projects are typically funded by thousands of donors, and smaller projects can attract tens of donors. For the purposes of our simulation, we use between 20-50 donors, which is still relatively large to the number of available projects. 
			\item Number of projects from 3-8: In crowdfunding, there are far more projects available than a donor actually sees. However, we can estimate that in a browsing session, a donor might view the top 3-8 promoted projects.
			\item Agent budgets from 0-400 and project costs from 1000-10000: Projects in real-life crowdfunding can have vastly varying costs. For our simulation, we want for the agents with all their money combined to be able to afford some, but not all of the available projects in order to create instances that are not trivially resolved by funding all or none of the projects.
		\end{itemize}
		\end{enumerate}

		Imposing MR on a rule seems to have a significant impact on both utilitarian and egalitarian welfare on average. Of course, since IMP implies MR, we expect that imposing IMP as a constraint will have an even greater cost on welfare, but from our experiment, this cost is a relatively small increase on the the cost of imposing MR. It is worth noting that in worst-case scenarios, it is always possible that there are no non-trivial outcomes that satisfy the constraints, and so there is a risk that a rule subject to a constraint could produce an outcome that gives all agents zero utility.
		
		When considering average performance, rules are more resilient to the imposition of fairness constraints for instances that simulate crowdfunding scenarios compared to share-house scenarios. When the number of agents is large and the number of projects is small, and project costs are large compared to agent budgets, it seems to be easier to achieve fairness properties.
	
		We typically expect the NASH rule to be a compromise between UTIL and EGAL. This manifests in the results, where the performance losses for NASH with respect to 	utilitarian welfare are considerably less than those for EGAL. Likewise, NASH loses considerably less with respect to egalitarian welfare than UTIL.

		\section{Conclusions}

		We proposed a concrete model for coordinating funding for projects. A formal approach is important to understand the fairness, participation, and efficiency requirements a system designer may pursue. We present a detailed taxonomy of such requirements and clarify their properties and relations. We also analyse natural welfarist rules 
	both axiomatically and experimentally. Our model is not just a rich setting to study collective decision making. We feel that the approaches considered in the paper go beyond academic study and can be incorporated in portals that aggregate funding for charitable projects. We envisage future work on online versions of the problem. 
	

	In practical applications of PFC, it is important to balance welfare demands with fairness conditions. Our experiment investigated the cost of fairness when imposing MR or IMP on UTIL, EGAL and NASH rules over instances that model crowdfunding and share-house scenarios. We find that imposing MR alone significantly reduces welfare on average, but imposing IMP as well produces a relatively small additional cost on welfare. The costs of imposing any fairness condition are much more pronounced on instances that model a share-house setting than a crowdfunding setting, suggesting that for a large number of agents and large project costs, fairness conditions are more easily met.

	\bibliographystyle{plainnat}


	\newpage

	\appendix

	\appendix

		\section{Remaining Proofs}
	
		\textbf{Proof of Proposition~\ref{prop:comp}}
	
	
		\begin{proof}
		Consider the Subset Sum problem in which there is a set of items $M=\{1,\ldots, m\}$ with  corresponding weights $w_1,\ldots,w_m$, and a real values $W$. The problem is a subset $S$ with maximum weight $\sum_{j\in S}w_j$ such that $\sum_{j\in S}w_j\leq W$. The problem is well-known to be NP-hard. We reduce it our setting by taking a project for each corresponding item and project weight for each item weight. Then any set of projects $S$ satisfies the axioms in the proposition if and only if the corresponding set of items $S$ is the solution to the Subset problem. 
		\end{proof}

	
		\noindent \textbf{Proof of Proposition~\ref{prop:testimp}}
		\begin{proof}
		In order to check whether a given set of projects $W$ is implementable, we just need to check whether the following linear program has a feasible solution or not. The following can also be checked via network flows. 
	
		\begin{align*}
		&x_{i,j}=0 &\text{ for all }i\in N, j\in [m] \text{ s.t. } p_j\notin A_i \\
		&\sum_{i\in N}x_{i,j}=w(p_j)& \text{ for all } p_j\in W \\
		&\sum_{i\in N}x_{i,j}=0& \text{ for all } p_j\notin W \\
		&\sum_{j\in C}x_{i,j}\leq b_i& \text{ for all } i\in N \\
		&x_{i,j}\geq 0& \text{ for all }i\in N, j\in [m]\\
		\end{align*}	
		\end{proof}
	
		\noindent \textbf{Proof of Proposition~\ref{po-paytoweak}}

				\begin{proof}
				Suppose an outcome $(S,x)$ is not weakly PO-pay. Then, it is trivially not PO-pay.
				Now suppose $(S,x)$ is not PO-pay. Then, there exists another outcome $(S',x')$ such that $\sum_{i\in N}x_i'\leq \sum_{i\in N}x_i$, $u_i(S')\geq u_i(S)$ for all $i\in N$ and  $u_i(S')> u_i(S)$ for some $i\in N$. Note that the $S'$ is afforadable with total amount $\sum_{i\in N}x_i'$ irrespective of who paid what. So $S'$ is still affordable if $x_i'\leq x_i$.
				\end{proof}

			
				\noindent \textbf{Proof of Proposition~\ref{Implications}}

				\begin{proof}
					We distinguish between the cases.
				\begin{enumerate}
		\item \textit{IMP implies MR}: Suppose an outcome $(S, x)$ satisfies IMP. Then there exists a set of vectors $\{ \bm{y}_c \}_{c\in S}$ where $\bm{y}_c$ is a vector of payments from each agent to project $c$ such that $\sum_{c\in S} \bm{y}_c = x$ and $y_{c, i} = 0$ if $c \notin A_i$. Examining any row $i$ of the vector $x$, which denotes the money charged from agent $i$, we see that $x_i = \sum_{c\in S} \bm{y}_{c, i}$. But since $y_{c, i} = 0$ if $c \notin A_i$, we have $x_i = \sum_{c\in S \cap A_i} y_{ij} $. Now, $$u_i(S) = \sum_{c\in S \cap A_i} w(c) = \sum_{c\in S \cap A_i}  \sum_{j\in N} y_{c, j} \geq \sum_{c\in S \cap A_i} y_{c, i} = x_i. $$ Hence, $(S, x)$ satisfies MR.
					\item \textit{PO implies PO-Pay}: Suppose an outcome is PO. Then, it is not Pareto dominated by any other outcome. Hence, it is not Pareto dominated by any outcome of lesser total cost, and so it is PO-Pay.
					\item \textit{PO-$X$ implies PO-$Y$ if $Y$ implies $X$}: Suppose some condition $Y$ implies another condition $X$. Now, suppose some outcome $(S, x)$ is PO-$X$. Then, $(S, x)$ is not Pareto dominated by any outcome that satisfies $X$. Since $Y$ implies $X$, the set of all outcomes satisfying $Y$ is a subset of the set of all outcomes satisfying $Y$. Thus, $(S, x)$ is not Pareto dominated by any outcome that satisfies $Y$, and so $(S, x)$ is PO-$Y$.
					\item \textit{PO-IMP implies EXH}: Suppose for a contradiction that $(S, x)$ is an outcome that satisfies PO-IMP but not EXH. Let an implementable payment function for this outcome be $y: N \times C \rightarrow \mathbb{R}^+ \cup \{0\}$. Since this outcome is not exhaustive, there is a set of agents $N'$ who can pool together their unspent money to fund another commonly-liked project $c'$. We can construct a new outcome $(S \cup \{c'\}, x')$ with a payment function $y': N \times C \rightarrow \mathbb{R}^+ \cup \{0\}$ such that for all agents $i \in N$ and projects $c \in S$, $y'(i, c) = y(i, c)$, and for all agents $j \in N'$, $y'(j, c') = \delta_j$, where $\delta_j$ is the contribution of each agent $j \in N'$ to the new project $c'$. Note that $(S \cup \{c'\}, x')$ is implementable since its payment function has agents only funding projects they approve of. Now, $(S \cup \{c'\}, x')$ Pareto dominates $(S, x)$ which is a contradiction since $(S, x)$ is PO-IMP. Therefore, any PO-IMP outcome must satisfy EXH.
					\item \textit{PO-IR implies EXH}: Suppose for a contradiction that $(S, \textbf{x})$ is an outcome that satisfies PO-IR but not EXH. Then there is a set of agents $N'$ who can pool together their unspent money to fund another commonly-liked project $c'$. Since no agent's utility decreases by funding this project, a new outcome $(S \cup \{ c'\}, \textbf{x}')$ is still IR, where $\textbf{x}'$ is any valid vector of payments. Also note that this $(S \cup \{ c'\}, \textbf{x}')$ Pareto dominates $(S, \textbf{x})$ since no agent's utility decreases and at least one agent's utility increases. This is a contradiction as $(S, \textbf{x})$ is PO-IR by our initial assumption. Hence, any PO-IR outcome is EXH.
					\item Suppose an outcome $(S, x)$ is CORE. Then for every subset of agents $N' \subseteq N$, for every subset of projects $C' \subseteq C$ such that $w(C') \leq \sum_{i \in N'} b_i$, for some agent $i \in N'$ $u_i(S)\geq w(C' \cap A_i).$ Now consider the case where $|N'| = 1$, i.e. $N'$ is a subset of one agent. We now have that for every agent $i$, for all $C' \subseteq C$ such that $w(C') \leq  b_i$, $u_i(S)\geq w(C' \cap A_i).$ Equivalently, for every agent $i$, $u_i(S)\geq \max_{S'\subseteq A_i, w(S')\leq b_i}(w(S'))$ and so $(S, x)$ is IR.
					\item The combination of PO-IMP and IMP imply PROP. Suppose an outcome does not satisfy PROP. Then this means that there is set of agents $N'\subseteq N$ \emph{only} approve of a set of projects $C'\subseteq C$ such that $\sum_{i\in N'}b_i\geq w(C')$ but not all projects in $C'$ are selected. Then one of the two cases occurs: (1) either the money of agents in $N'$ is used for projects not approved by them which violates IMP (2) the agents in $N'$ can pool in unspent money to fund an additional project in $C'$ that is not funded, which means that the outcome is not PO-IMP.
				\end{enumerate}
				\end{proof}
				
				\noindent \textbf{Proof of Proposition~\ref{prop:egalnotsp}}


						\begin{proof}
						Consider the instance given in Table \ref{tab: EGAL not SP}. Due the total budget constraint, at most two of the projects can be funded, so we check the egalitarian welfare derived by funding any two projects in Table \ref{tab: Performance of subsets against EGAL rules - SP}.

							\begin{table}[h!]
							\centering
							\begin{tabular}{lccccc} \hline
							            	& 		& X (3) 	& Y (2) & Z (1)	\\ \hline
										& Budget	&  		& 		&		\\ \hline
							    Agent 1 	& 1 		&      	& +     	& +		\\ \hline
							    Agent 2 	& 1	 	&       	& +   	& +		\\ \hline
							    Agent 3 	& 3		& + 		& +		& 		\\ \hline
							\end{tabular}
							\caption{Example instance where EGAL rules are not strategyproof.}
							\label{tab: EGAL not SP}
							\end{table}

							\begin{table}[h!]
							\centering
							\begin{tabular}{lccccc} \hline
							            		& $\{X, Y\}$	& $\{X, Z\}$	& $\{Y, Z\}$		\\ \hline
						Egalitarian Welfare 	& (2, 2, 5)	& (1, 1, 3)	& (2, 3, 3)		\\ \hline
							\end{tabular}
							\caption{Egalitarian welfares of certain project sets to be funded.}
							\label{tab: Performance of subsets against EGAL rules - SP}
							\end{table}

							Observe that it is possible for an implementable outcome to fund $\{Y, Z\}$ by having Agents 1 and 2 pay for them, and so $\{Y, Z\}$ is funded by each of the above rules. Then, the utility for Agent 3 is 2.

							Now, suppose Agent 3 misrepresents her preferences to suppress the fact that she approves of project $Y$. The new perceived instance is shown in Table \ref{tab: EGAL not SP - misrep} and again, we compute the egalitarian welfare produced by funding any two projects in Table \ref{tab: Performance of subsets against EGAL rules - misrep - SP}.

							\begin{table}[h!]
							\centering
							\begin{tabular}{lccccc} \hline
							            	& 		& X (3) 	& Y (2) & Z (1)	\\ \hline
										& Budget	&  		& 		&		\\ \hline
							    Agent 1 	& 1 		&      	& +     	& +		\\ \hline
							    Agent 2 	& 1	 	&       	& +   	& +		\\ \hline
							    Agent 3 	& 3		& + 		& 		& 		\\ \hline
							\end{tabular}
							\caption{Example instance where rules are not strategyproof.}
							\label{tab: EGAL not SP - misrep}
							\end{table}

							\begin{table}[h!]
							\centering
							\begin{tabular}{lccccc} \hline
							            		& $\{X, Y\}$	& $\{X, Z\}$	& $\{Y, Z\}$		\\ \hline
						Egalitarian Welfare 	& (2, 2, 3)	& (1, 1, 3)	& (0, 3, 3)		\\ \hline
							\end{tabular}
							\caption{Egalitarian welfares of certain project sets to be funded.}
							\label{tab: Performance of subsets against EGAL rules - misrep - SP}
							\end{table}

							Note that there is an implementable outcome that funds $\{X, Y\}$, where Agent 3 pays for $X$ and Agents 1 and 2 pay for $Y$. Hence, $\{X, Y\}$ is funded by each of the rules. The new utility for Agent 3 is 3.

							Thus, by misrepresenting her preferences, Agent 3 is able to increase the utility she receives when egalitarian rules are used. Therefore, EGAL, EGAL-MR and EGAL-IMP are not strategyproof.

							\end{proof}
				
				\section{Experiments}
				
				The results of the experiments are depicted in Figures~\ref{fig: UTIL_Average_Sharehouse}, \ref{fig: UTIL_Average_Crowdfunding}, 
				\ref{fig: UTIL_Worst_Sharehouse},
				\ref{fig: UTIL_Worst_Crowfunding},
			 	\ref{fig: EGAL_Average_Sharehouse},
			 	\ref{fig: EGAL_Average_Crowdfunding}, \ref{fig: EGAL_Worst_Sharehouse}, and 	\ref{fig: EGAL_Worst_Crowfunding}.
				
				\begin{figure*}
				    \makebox[\textwidth][c]{%
				    \begin{minipage}{1.2\textwidth}
				    \centering
				    \begin{minipage}{0.4\textwidth}
				        \centering
				        \includegraphics[width=\textwidth]{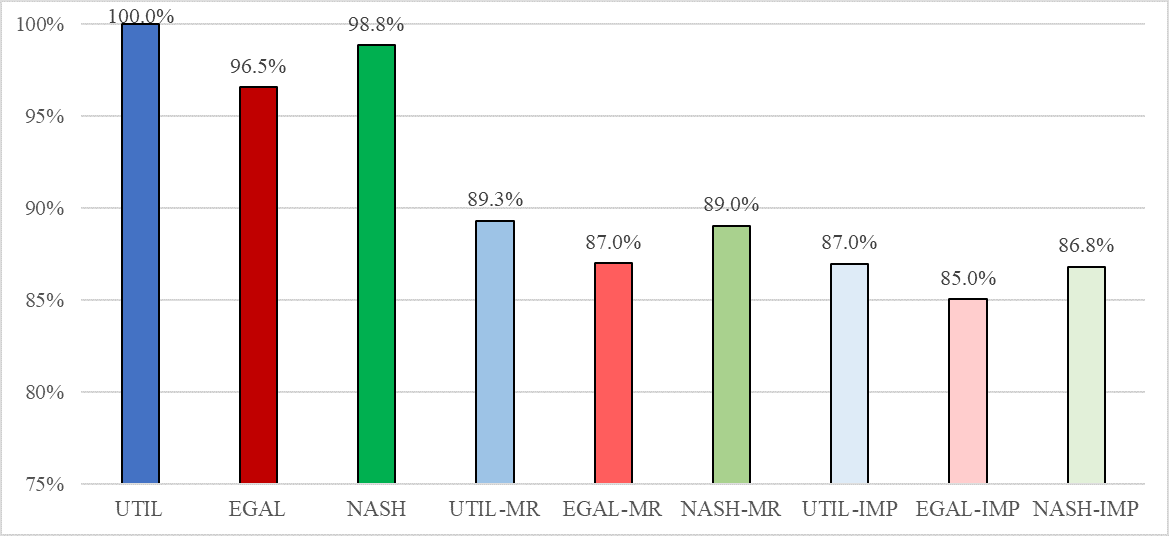}
						\caption{Average performance of rules with respect to utilitarian welfare in share-house simulations as a percentage of the maximum achievable utilitarian welfare.}
						\label{fig: UTIL_Average_Sharehouse}
				    \end{minipage}%
				    \hspace{0.03\textwidth}
				    \begin{minipage}{0.4\textwidth}
				        \centering
				        \includegraphics[width=\textwidth]{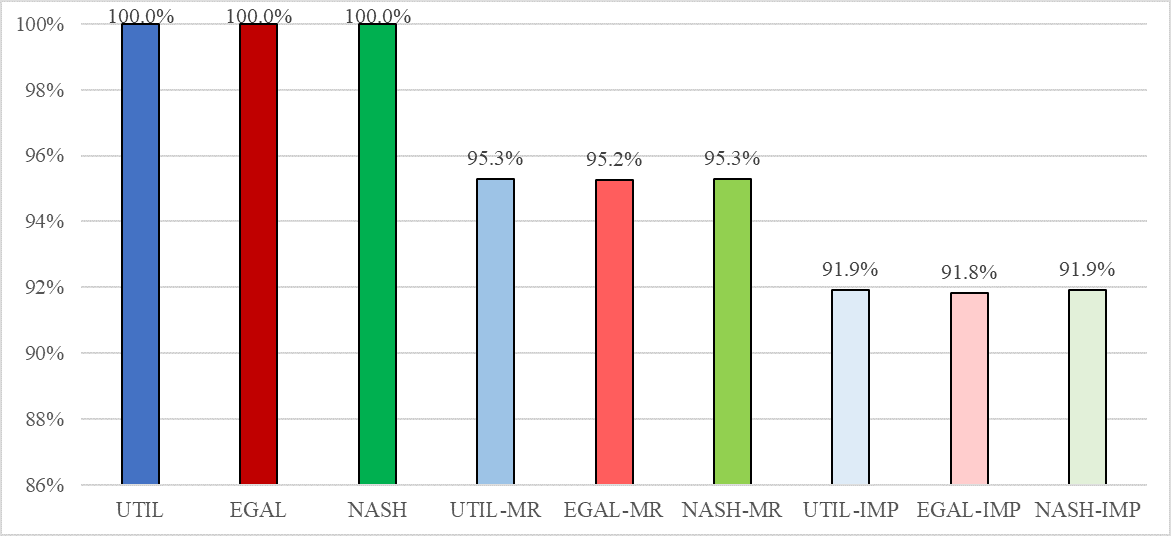}
						\caption{Average performance of rules with respect to utilitarian welfare in crowdfunding simulations as a percentage of the maximum achievable utilitarian welfare.}
						\label{fig: UTIL_Average_Crowdfunding}
				    \end{minipage}
				    \end{minipage}
				    }

					\vspace{0.02\textwidth}

					\makebox[\textwidth][c]{%
					\begin{minipage}{1.2\textwidth}
					\centering
				    \begin{minipage}{.4\textwidth}
				        \centering
				        \includegraphics[width=\textwidth]{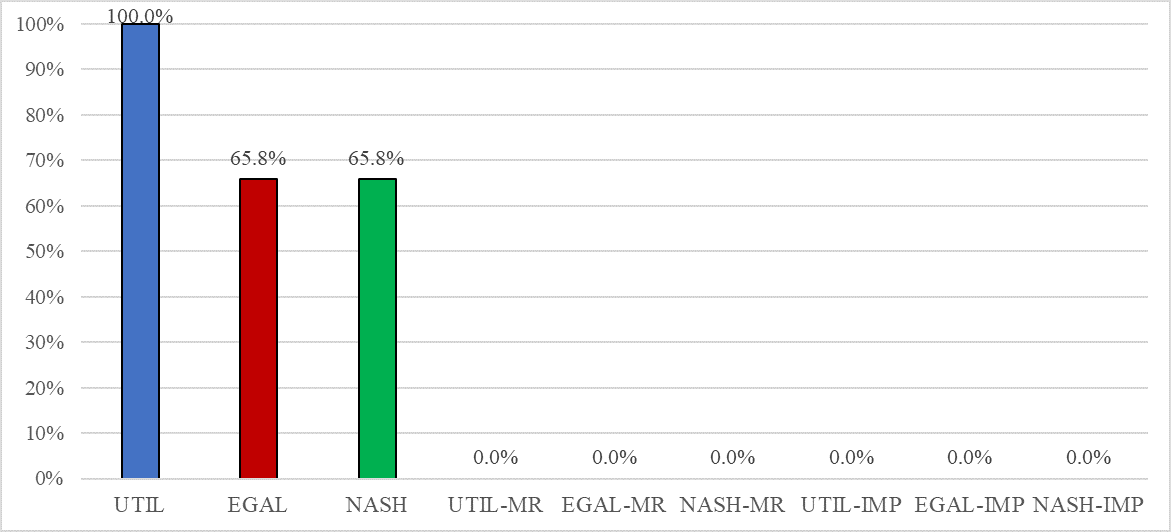}
						\caption{Worst-case performance of rules with respect to utilitarian welfare in share-house simulations as a percentage of the maximum achievable utilitarian welfare.}
						\label{fig: UTIL_Worst_Sharehouse}
				    \end{minipage}%
				    \hspace{0.03\textwidth}
				    \begin{minipage}{0.4\textwidth}
				        \centering
				        \includegraphics[width=\textwidth]{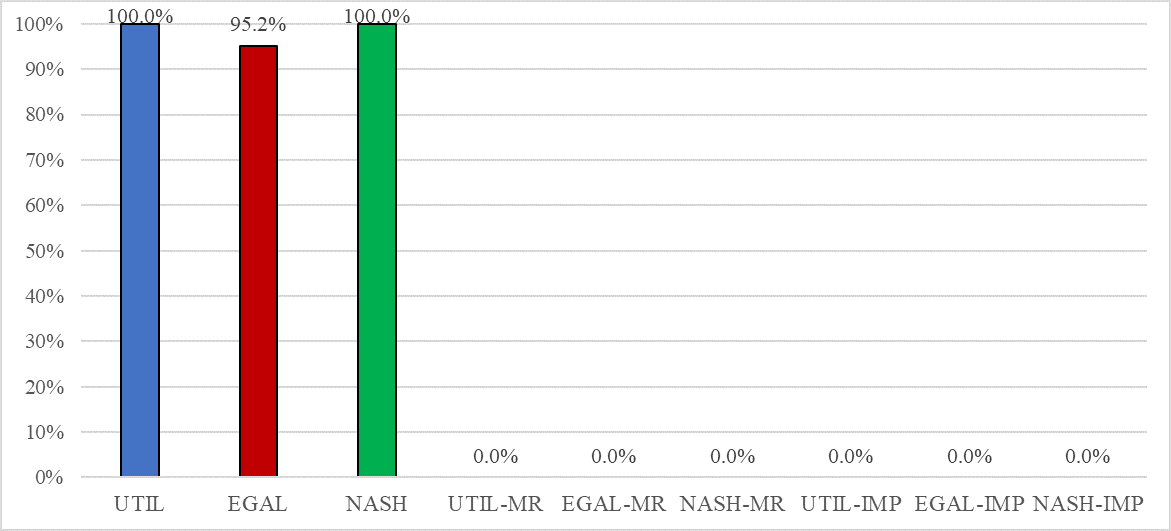}
						\caption{Worst-case performance of rules with respect to utilitarian welfare in crowdfunding simulations as a percentage of the maximum achievable utilitarian welfare.}
						\label{fig: UTIL_Worst_Crowfunding}
				    \end{minipage}
				    \end{minipage}
					}

					\vspace{0.02\textwidth}

					\makebox[\textwidth][c]{%
					\begin{minipage}{1.2\textwidth}
					\centering
				    \begin{minipage}{.4\textwidth}
				        \centering
				        \includegraphics[width=\textwidth]{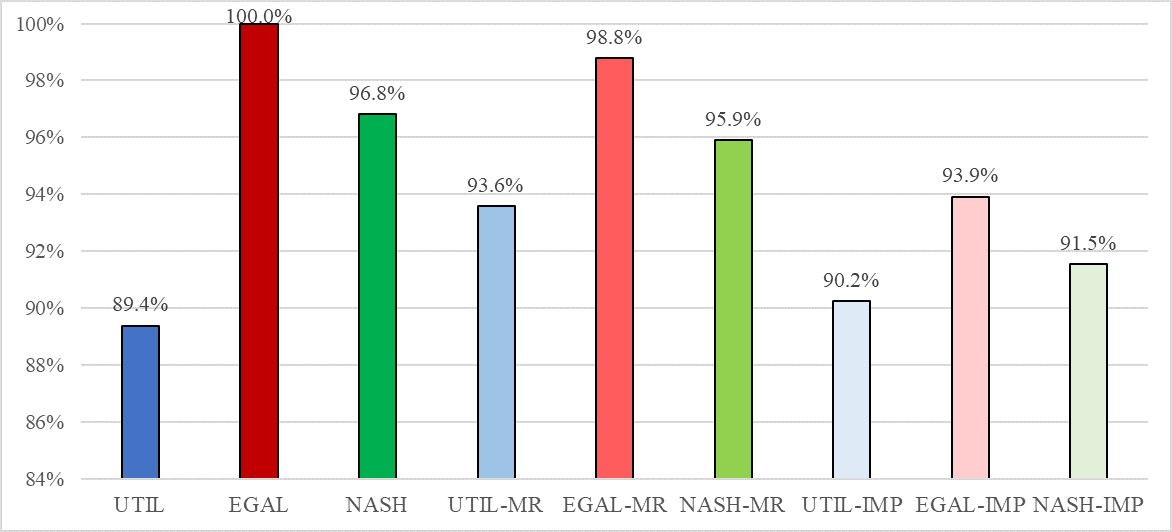}
						\caption{Average performance of rules with respect to egalitarian welfare in share-house simulations as a percentage of the maximum achievable egalitarian welfare.}
						\label{fig: EGAL_Average_Sharehouse}
				    \end{minipage}%
				    \hspace{0.03\textwidth}
				    \begin{minipage}{0.4\textwidth}
				        \centering
				        \includegraphics[width=\textwidth]{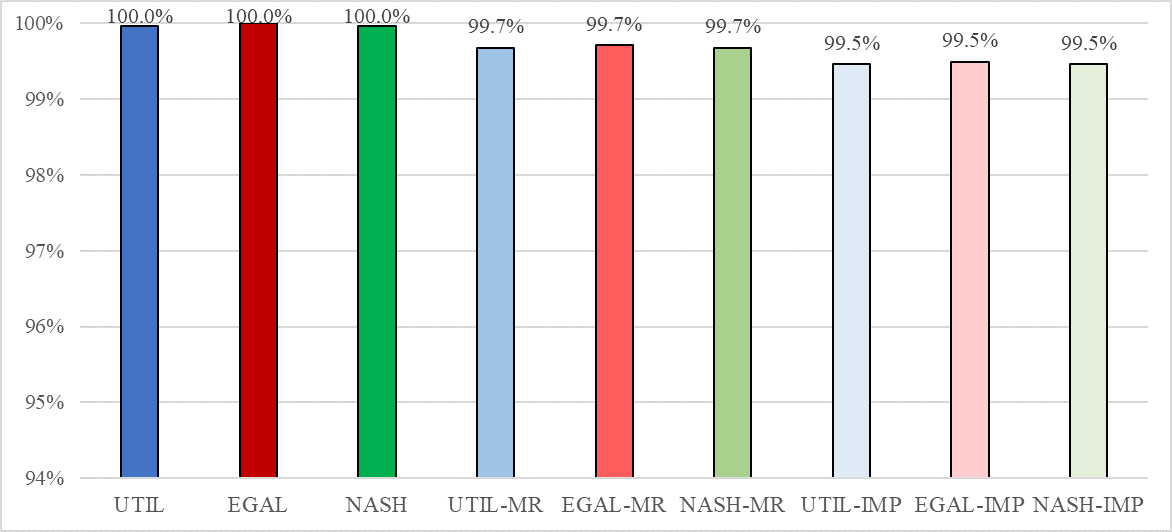}
						\caption{Average performance of rules with respect to egalitarian welfare in crowdfunding simulations as a percentage of the maximum achievable egalitarian welfare.}
						\label{fig: EGAL_Average_Crowdfunding}
				    \end{minipage}
				    \end{minipage}
					}

					\vspace{0.02\textwidth}

					\makebox[\textwidth][c]{%
					\begin{minipage}{1.2\textwidth}
					\centering
				    \begin{minipage}{.4\textwidth}
				        \centering
				        \includegraphics[width=\textwidth]{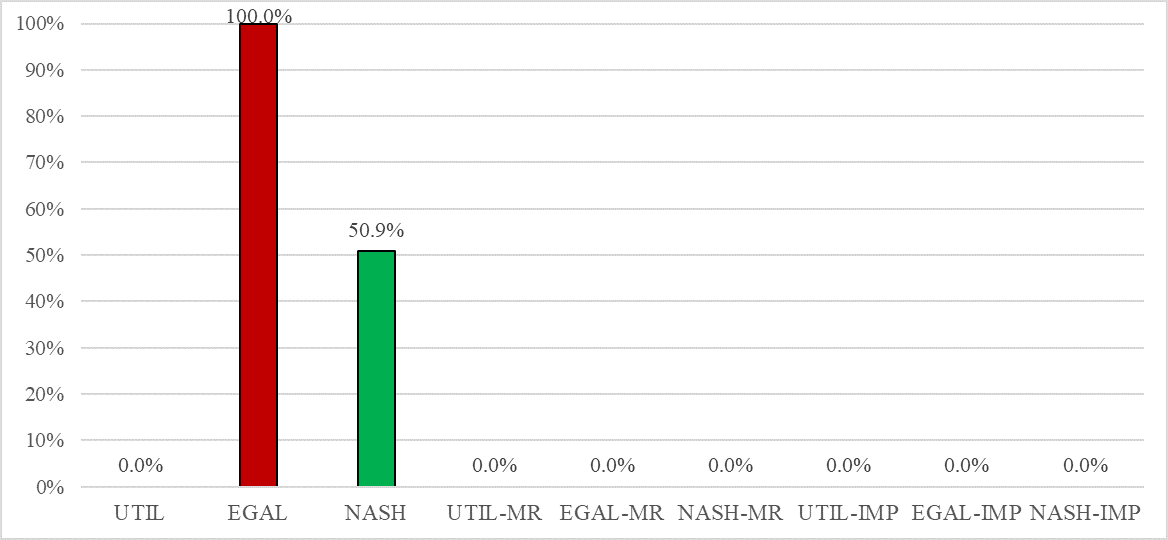}
						\caption{Worst-case performance of rules with respect to egalitarian welfare in share-house simulations as a percentage of the maximum achievable egalitarian welfare.}
						\label{fig: EGAL_Worst_Sharehouse}
				    \end{minipage}%
				    \hspace{0.03\textwidth}
				    \begin{minipage}{0.4\textwidth}
				        \centering
				        \includegraphics[width=\textwidth]{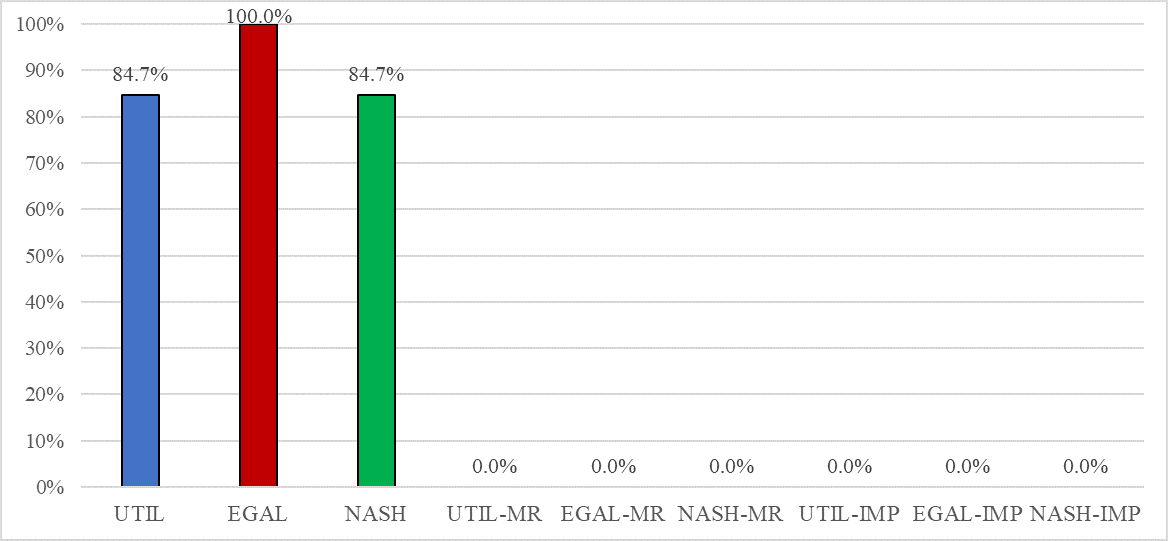}
						\caption{Worst-case performance of rules with respect to egalitarian welfare in crowdfunding simulations as a percentage of the maximum achievable egalitarian welfare.}
						\label{fig: EGAL_Worst_Crowfunding}
				    \end{minipage}
				    \end{minipage}
				    }

				\end{figure*}
	
		\section{Additional Propositions}

	
	\setcounter{proposition}{13}

		\begin{proposition}
		UTIL, EGAL and NASH satisfy PO.
		\end{proposition}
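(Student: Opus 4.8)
The plan is to exploit the remark, already recorded in the statement of PO, that Pareto optimality of an outcome $(S,x)$ depends only on the utility vector $u(S)=(u_i(S))_{i\in N}$ it induces, together with the observation that each of UTIL, EGAL, NASH selects an outcome that optimizes a welfare objective which is \emph{strictly monotone} under Pareto domination. Since $C$ is finite there are only finitely many achievable utility vectors, so each objective attains its optimum and the rules are well defined. Then, if the outcome $(S,x)$ chosen by one of the rules were Pareto-dominated by some feasible outcome $(S',x')$, the vector $u(S')$ would dominate $u(S)$ componentwise with at least one strict inequality, and we will show this forces a strictly larger objective value for $(S',x')$, contradicting the optimality by which the rule chose $(S,x)$. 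It therefore suffices to verify strict monotonicity of each of the three objectives, and the three cases can be handled independently.

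For UTIL this is immediate: if $u_i(S')\ge u_i(S)$ for all $i$ with strict inequality for some $i$, then $\sum_{i\in N}u_i(S') > \sum_{i\in N}u_i(S)$. For EGAL I would first record the elementary fact that componentwise domination is preserved by sorting: for $v,v'\in\mathbb{R}_{\ge 0}^{N}$ with $v'_i\ge v_i$ for all $i$, the $k$-th smallest entry of $v'$ is at least the $k$-th smallest entry of $v$ for every $k$; this follows from the minimax characterization $\operatorname{sort}(v)_k=\min_{|T|=k}\max_{i\in T}v_i$, since enlarging the entries of $v$ can only enlarge each inner maximum. If in addition $v'\ne v$, then $\sum v'_i>\sum v_i$, so the two sorted vectors cannot be equal; hence the first coordinate in which they disagree is strictly larger for $v'$, i.e. $\operatorname{sort}(v')>_{\mathrm{lex}}\operatorname{sort}(v)$. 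Applying this with $v=u(S)$, $v'=u(S')$ shows a Pareto improvement strictly increases the leximin objective, so no EGAL outcome can be Pareto-dominated.

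For NASH the only delicacy — and the one step that needs care — is the degenerate case in which every feasible outcome leaves some agent with utility $0$, so the product is identically $0$; here NASH must be read with the standard convention of first maximizing the number $|\{i:u_i(S)>0\}|$ of agents with positive utility and then maximizing the product over that support (without this refinement the bare product rule could return, e.g., the empty outcome and genuinely fail PO). With this convention a Pareto improvement $(S',x')$ of $(S,x)$ satisfies $\{i:u_i(S)>0\}\subseteq\{i:u_i(S')>0\}$; if this inclusion is strict the refined count increases, and if it is an equality then on the common support every factor weakly increases and at least one strictly increases (the agent with the strict gain already had positive utility, since otherwise the support would have grown), so the product over the support strictly increases. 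When the optimal product is already positive the argument is simply the UTIL argument applied to the factors. In all cases the NASH objective strictly increases under a Pareto improvement, completing the proof; the rest of the argument is routine.
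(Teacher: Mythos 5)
Your proof follows essentially the same route as the paper's: the paper's entire argument is the one-line observation that a Pareto improvement would strictly increase the utilitarian/egalitarian/Nash objective, contradicting optimality, and your UTIL and EGAL cases are just careful elaborations of that (the sorting lemma via $\operatorname{sort}(v)_k=\min_{|T|=k}\max_{i\in T}v_i$ is a nice way to make the leximin step airtight). The one place where you genuinely add something is NASH: you are right that the bare product objective, as the paper defines it, is \emph{not} strictly monotone under Pareto improvement when every feasible outcome gives some agent zero utility, so the paper's one-line proof silently assumes either positive products or the support-maximizing refinement you introduce; with that convention your case analysis (support grows, or support equal and one factor strictly increases) is correct and actually closes a gap the paper leaves open. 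In short: same decomposition and same key idea, but your treatment of the degenerate Nash case is more honest than the paper's and would be worth stating explicitly alongside the rule's definition.
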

		\begin{proof}
		Suppose there was an outcome that was Pareto dominant over the outcome returned by any of these rules. Then, it would also have a strictly greater utilitarian/egalitarian/Nash welfare to this outcome, which is a contradiction.
		\end{proof}
	
	
		\begin{proposition}
		UTIL and NASH do not satisfy MR (or IMP by corollary) or IR (or CORE by corollary).
		\end{proposition}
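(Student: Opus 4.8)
The plan is to exhibit a single PFC instance on which the unique welfare-maximizing set of funded projects, under both UTIL and NASH, forces an agent to be charged strictly more than the utility she receives, and also leaves her with less utility than she could have secured by funding alone. Having established that MR and IR fail, the IMP and CORE claims follow immediately from Proposition~\ref{Implications}: IMP implies MR, and CORE implies IR.

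Concretely, I would take three agents with budgets $b_1 = 5$ and $b_2 = b_3 = 1$, and three projects: $P$ with $w(P) = 5$ approved only by agents $2$ and $3$; $Q$ with $w(Q) = 5$ approved only by agent $1$; and $Z$ with $w(Z) = 1$ approved by all three agents. As in the earlier strategyproofness examples, the sole purpose of $Z$ is to keep the Nash welfare of the relevant outcome strictly positive. The total budget is $7$, so the bundles $\{P, Q\}$ and $\{P, Q, Z\}$ are infeasible; I would enumerate the remaining (at most) six fundable bundles, record the induced utility profile of each, and check that $\{P, Z\}$, with profile $(1, 6, 6)$, is the unique maximizer of both $\sum_i u_i(S)$ and $\prod_i u_i(S)$. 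Hence every UTIL outcome and every NASH outcome funds exactly $\{P, Z\}$, paired with some feasible payment vector.

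Next I would argue that any feasible outcome funding $\{P, Z\}$ violates MR and IR for agent $1$. For MR: the total cost of $\{P, Z\}$ is $6$, while agents $2$ and $3$ together hold only $2$, so agent $1$ must be charged $x_1 \ge 4$; yet her only approved project in $\{P, Z\}$ is $Z$, so $u_1(\{P,Z\}) = w(Z) = 1 < 4 \le x_1$. Since IMP implies MR (Proposition~\ref{Implications}), IMP also fails. For IR: agent $1$ can afford $\{Q\}$ on her own since $w(Q) = 5 \le b_1$, giving her utility $5$, whereas she obtains utility $1$ from $\{P, Z\}$; thus IR fails, and since CORE implies IR, CORE fails as well.

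I expect no real obstacle here; the one point requiring care is pinning down the numbers so that $\{P, Z\}$ is the \emph{unique} optimum for both rules at once --- in particular so that $\{P\}$ (which also gives agent $1$ little utility) neither ties nor beats $\{P, Z\}$ and the Nash values are genuinely separated --- but this is a routine finite check over the subsets of $\{P, Q, Z\}$. Feasibility of the outcome funding $\{P, Z\}$ is likewise immediate: agents $2$ and $3$ contribute their full budgets toward $P$ and agent $1$ contributes the remaining $4$ (three toward $P$, one toward $Z$), so $x_i \le b_i$ for all $i$.
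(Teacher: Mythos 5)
Your proposal is correct and follows essentially the same route as the paper: exhibit a single instance whose (unique) welfare-optimal funded set forces one agent to be charged more than her utility and to do worse than funding alone, then derive the IMP and CORE failures from the implications IMP $\Rightarrow$ MR and CORE $\Rightarrow$ IR. Your three-agent instance is in fact slightly more careful than the paper's two-agent example, since the commonly approved project $Z$ makes the Nash welfare of the optimum strictly positive and the maximizer unique, whereas in the paper's example all candidate outcomes have Nash welfare zero, leaving the NASH claim dependent on tie-breaking.
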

		\begin{proof}
		For the profile in Table \ref{tab: UTIL, NASH not MR}, UTIL and NASH would require that only project $X$ is funded by using all of both agents' money. But then, Agent 2 could have left the mechanism and derived better utility on her own (IR). Also, her return is less than her contribution, so UTIL does not satisfy MR.
		\begin{table}[h!]
		\centering
		\begin{tabular}{lccccc} \hline
		            & & X (20) & Y (10) \\ \hline
					&Budget	&  & \\ \hline
		    Agent 1 & 10 	& +      &        \\ \hline
		    Agent 2 & 10 	&        & +  	  \\ \hline
		\end{tabular}
		\caption{Example profile where UTIL and NASH outcomes do not satisfy MR.}
		\label{tab: UTIL, NASH not MR}
		\end{table}
		\end{proof}
	
		\begin{proposition}
		EGAL does not satisfy MR (or IMP by corollary).
		\end{proposition}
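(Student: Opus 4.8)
The plan is to prove the statement by exhibiting a single small PFC instance on which every outcome returned by EGAL fails MR; the IMP clause then comes for free, since IMP implies MR by the first part of Proposition~\ref{Implications}, so an EGAL rule satisfying IMP would also satisfy MR.

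The instance I would use is essentially the one in Table~\ref{tab: UTIL, NASH not MR}: two agents with $b_1=b_2=10$, a project $X$ of cost $20$ approved only by agent~1, and a project $Y$ of cost $10$ approved only by agent~2. First I would enumerate the feasible sets of funded projects. Since the total budget is $20$, the set $\{X,Y\}$ (cost $30$) is infeasible, and the only options are $\emptyset$, $\{X\}$ and $\{Y\}$, with sorted utility profiles $(0,0)$, $(0,20)$ and $(0,10)$ respectively. Comparing these lexicographically, $(0,20)$ is the largest, so EGAL must fund exactly $\{X\}$. Then I would observe that, because $w(X)=b_1+b_2$, the only feasible payment vector for $S=\{X\}$ is $x_1=x_2=10$; hence agent~2 is charged $x_2=10$ while $u_2(\{X\})=0$, which violates MR. As this holds for every outcome EGAL can return on this instance, EGAL does not satisfy MR, and by the observation above it does not satisfy IMP either.

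There is not much of an obstacle beyond careful bookkeeping, and the one point that deserves attention is checking that EGAL is genuinely forced into the offending outcome: I must confirm that no feasible choice of funded projects yields a sorted utility profile lexicographically above $(0,20)$ --- in particular that funding both projects is truly infeasible and that there is no way for agent~2 to raise her own utility while $X$ is funded --- and that the conclusion is independent of how EGAL breaks ties among payment vectors, which it is, because the payment vector for $\{X\}$ is uniquely determined. A secondary point worth noting (though not strictly needed) is the contrast with UTIL/NASH: the same instance works because lexicographic maximization, like the utilitarian and Nash objectives, still prefers the large-but-concentrated profile $(0,20)$ over the smaller but more balanced $(0,10)$.
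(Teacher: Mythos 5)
Your proof is correct and follows the same counterexample strategy as the paper, just with a different instance: the paper's own example (budgets $25$ and $5$, costs $20$ and $10$) has EGAL fund \emph{both} projects so that agent~1 is charged $25$ for utility $20$, whereas your instance forces EGAL, via the leximin tie-break between the sorted profiles $(0,20)$ and $(0,10)$, to fund only $\{X\}$ and charge agent~2 her full budget for zero utility. Both arguments are sound; note only that yours relies on the lexicographic (rather than plain maximin) reading of EGAL, which you correctly verify, and on the payment vector for $\{X\}$ being forced, which you also check.
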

		\begin{proof}
		For the profile in Table \ref{tab: UTIL, NASH not MR}, the maximally egalitarian output is to implement both projects. However, in this case, Agent 1 is charged $25$ but receives a utility of 20, so this outcome does not satisfy MR.
		\begin{table}[h!]
		\centering
		\begin{tabular}{lccccc} \hline
		            & & X (20) & Y (10) \\ \hline
					&Budget	&  & \\ \hline
		    Agent 1 & 25 	& +      &        \\ \hline
		    Agent 2 & 5	 	&        & +  	  \\ \hline
		\end{tabular}
		\caption{Example profile where EGAL outcomes do not satisfy MR.}
		\label{tab: UTIL, NASH not MR}
		\end{table}
		\end{proof}
	
		\begin{proposition}
		EGAL does not satisfy IR.
		\end{proposition}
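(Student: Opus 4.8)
The plan is to produce a two-agent, two-project instance on which the unique EGAL outcome charges one agent more than she could recoup by funding alone. The idea, in the spirit of the instances used to refute MR for EGAL, is to give one agent a modest budget and a single cheap approved project, and a second agent a small budget and a single expensive approved project, rigged so that the egalitarian optimum funds the expensive project by conscripting the first agent's money (which, since EGAL is unconstrained, need not respect approvals).

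Concretely I would take projects $X$ with $w(X)=4$ and $Y$ with $w(Y)=6$; Agent $1$ with $b_1=4$ approving only $X$, and Agent $2$ with $b_2=2$ approving only $Y$. The first step is to list the feasible funded sets: the total budget is $6$, so $\{X,Y\}$ (cost $10$) cannot be funded, and the candidates are $\emptyset,\{X\},\{Y\}$ with sorted utility profiles $(0,0)$, $(0,4)$ and $(0,6)$ respectively. Hence the lexicographically largest profile is $(0,6)$, attained only by funding $\{Y\}$, so every EGAL outcome funds exactly $\{Y\}$.

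The second step is to read off the payments. Funding $Y$ requires total contributions of $6$; since $b_2=2$, Agent $1$ must contribute $x_1=4$, exhausting her budget. As $Y\notin A_1$, her utility is $u_1(\{Y\})=0$, whereas her stand-alone benchmark is $\max_{S'\subseteq A_1,\,w(S')\le b_1}w(S')=w(\{X\})=4$. Thus $u_1(\{Y\})=0<4$, IR is violated, and since this holds for every EGAL outcome on the instance, EGAL does not satisfy IR.

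I do not anticipate a real obstacle here. The only points needing care are verifying that no feasible outcome beats $(0,6)$ lexicographically --- immediate because Agent $1$'s utility is capped at $w(X)=4$ and both projects cannot be jointly funded --- and that funding $Y$ forces $x_1=4$ regardless of how EGAL breaks ties, which follows from $b_2=2<w(Y)$.
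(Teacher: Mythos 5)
Your proof is correct and takes essentially the same route as the paper: a two-agent, two-project counterexample in which EGAL funds the project useless to the agent who could have self-funded her own approved project, violating $u_i(S)\geq\max_{S'\subseteq A_i,\,w(S')\leq b_i}w(S')$. The only minor difference is that your instance (disjoint approvals, all minima equal to $0$) relies on the second coordinate of the leximin comparison to force $\{Y\}$, whereas the paper's instance separates the outcomes already at the worst-off agent's utility; both are valid under the paper's lexicographic definition of EGAL.
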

		\begin{proof}
		In the below example, given the total budget, the choice is to implement either $X$ or $Y$ or neither. The egalitarian outcome with be to implement $Y$, but then, by leaving the system, Agent 1 could pay for $X$ herself and get a better outcome.  \\
		\begin{table}[h!]
		\centering
		\begin{tabular}{lccccc} \hline
		            & & X (20) & Y (10) \\ \hline
					&Budget	&  & \\ \hline
		    Agent 1 & 20 	& +      & +      \\ \hline
		    Agent 2 & 5	 	&        & +  	  \\ \hline
		\end{tabular}
		\end{table}
		\end{proof}
	
		\begin{proposition}
		UTIL-IMP, EGAL-IMP and NASH-IMP do not satisfy PO-MR.
		\end{proposition}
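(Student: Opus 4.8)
The plan is to recycle the instance already constructed for Proposition~\ref{prop:po-pay} (Table~\ref{table: NASH-IMP PO-Pay Profile}) and observe that the Pareto improvement exhibited there is in fact an MR outcome, hence witnesses a violation of PO-MR. Recall that in that instance every implementable outcome funds a subset of $\{A,B,C\}$ or of $\{B,D\}$, and from the utility table (Table~\ref{table: IMP PO-Pay Utilities}) the unique UTIL-IMP, EGAL-IMP and NASH-IMP outcome funds $\{A,B,C\}$, giving utility profile $(4,7,7,7,3)$, while funding $\{D,E\}$ gives $(7,7,7,7,7)$ at the same total cost $14$. So $\{D,E\}$ Pareto dominates $\{A,B,C\}$; what remains is to check that some feasible outcome funding $\{D,E\}$ is MR.

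\textbf{Key step (MR of the dominating outcome).} In this instance each agent approves exactly one of $D$ and $E$: Agents $1,2,3$ approve $E$ (cost $7$) and Agents $4,5$ approve $D$ (cost $7$), so $u_i(\{D,E\})=7$ for every $i$. The total cost of $\{D,E\}$ is $14$, which equals $b(N)=4+1+1+5+3=14$; hence in any feasible outcome funding $\{D,E\}$ we must have $x_i=b_i$ for all $i$, and then $x_i=b_i\le 5<7=u_i(\{D,E\})$. Thus every feasible outcome funding $\{D,E\}$ satisfies MR trivially. (It need not be IMP, and indeed it is not, but PO-MR only asks that it not be Pareto dominated by an \emph{MR} outcome.)

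\textbf{Conclusion.} The unique UTIL-IMP / EGAL-IMP / NASH-IMP outcome funds $\{A,B,C\}$ and is Pareto dominated by an MR outcome funding $\{D,E\}$ (weakly better for every agent, strictly better for Agents $1$ and $5$), so none of these rules satisfies PO-MR. There is essentially no hard step here beyond reusing the earlier construction; the only thing to be careful about is confirming that the dominating outcome lies in the MR class, which the budget-versus-utility comparison above makes immediate.
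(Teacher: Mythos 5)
Your proof is correct, but it takes a different route from the paper's. The paper proves this proposition with a fresh, minimal counterexample (Table~\ref{tab: X-IMP not PO-MR}): two projects $X$ (cost 10) and $Y$ (cost 12), agents with budgets $10,10,2$, where the only non-empty implementable outcome funds $X$ alone, yet funding both projects with everyone paying their full budget is MR and Pareto dominates it. You instead recycle the instance of Table~\ref{table: NASH-IMP PO-Pay Profile} from Proposition~\ref{prop:po-pay} and verify directly that the dominating outcome funding $\{D,E\}$ is MR (each $x_i\le b_i\le 5<7=u_i(\{D,E\})$), which is the one genuinely new step your argument needs and which you carry out correctly; the rest (uniqueness of the $\{A,B,C\}$ outcome for UTIL-IMP/EGAL-IMP/NASH-IMP, and the Pareto domination) is imported from the earlier proposition. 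Your approach buys economy — one instance witnesses both the PO-Pay and the PO-MR failures, and in fact your observation is essentially the concrete form of the abstract shortcut that since IMP implies MR, Proposition~\ref{prop:mrpo} plus the PO-Pay failure of Proposition~\ref{prop:po-pay} already forces a PO-MR failure — while the paper's approach buys a smaller, self-contained example that can be checked at a glance without re-establishing the structure of implementable outcomes in the larger instance.
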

		\begin{proof}
		In Table \ref{tab: X-IMP not PO-MR}, the only implementable outcomes are those in which no projects are funded or Agents 1 and 2 pay for project $X$. But the outcome where all three projects are funded by all agents spending all of their money satisfies MR and Pareto dominates any of the implementable outcomes.
		\begin{table}[h!]
		\centering
		\begin{tabular}{lccccc} \hline
		            	& 		& X (10)	& Y (12)	\\ \hline
					& Budget	& 		&		\\ \hline
		    Agent 1 	& 10 	& +     	& 		\\ \hline
		    Agent 2 	& 10	 	& +    	& 		\\ \hline
		    Agent 3 	& 2		& 		& +		\\ \hline
		\end{tabular}
		\caption{Example profile where UTIL-IMP, EGAL-IMP and NASH-IMP outcomes do not satisfy PO-MR.}
		\label{tab: X-IMP not PO-MR}
		\end{table}
		\end{proof}
	
		\begin{proposition}
		UTIL-MR and NASH-MR do not satisfy IMP.
		\end{proposition}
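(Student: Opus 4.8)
The plan is to prove the statement by a single explicit counterexample in which the project set chosen by UTIL-MR (and, independently, by NASH-MR) is one that no implementable payment vector can realize. The instance I would use has two projects $X$ and $Y$, each of cost $10$, and four agents: agents $1$ and $2$ approve only $X$ with budgets $4$ each; agent $3$ approves only $Y$ with budget $10$; and agent $4$ approves only $Y$ with budget $2$. The design principle is that $X$'s supporters cannot afford $X$ on their own ($b_1+b_2 = 8 < 10 = w(X)$), so by the definition of IMP no implementable outcome can ever fund $X$; yet $\{X,Y\}$ stays MR-feasible because agent $4$ still derives utility $10$ from $Y$ and so may contribute her $\$2$ toward $X$ without violating MR, while agent $3$ covers $Y$.

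First I would verify that $\{X,Y\}$ is MR-feasible and pin down the payments. The MR cap on agent $i$ is $\min(u_i(\{X,Y\}),b_i)$, which equals $4,4,10,2$ respectively, and these sum to exactly $20 = w(X)+w(Y)$; hence the only MR payment vector realizing $\{X,Y\}$ is $(4,4,10,2)$, with agents $1,2,4$ together covering $X$ and agent $3$ covering $Y$. Next I would check that this outcome is not IMP: any implementable realization of $\{X,Y\}$ needs $\sum_i y(i,X) = w(X) = 10$ with $y(i,X)=0$ whenever $X\notin A_i$, forcing $y(1,X)+y(2,X) = 10 \le b_1 + b_2 = 8$, a contradiction. (This argument in fact shows directly that $\{X,Y\}$, and indeed every project set containing $X$, is non-IMP.)

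Then I would show $\{X,Y\}$ is the unique welfare maximiser among MR outcomes for both the utilitarian and the Nash objective, by a finite case check over the four candidate project sets. The set $\{X\}$ is not MR-feasible, since the only agents with positive utility from $\{X\}$ are $1$ and $2$, whose budgets total $8 < 10$. The sets $\{Y\}$ and $\emptyset$ are MR-feasible but give utilitarian welfare $20$ and $0$ and Nash welfare $0$ and $0$, whereas $\{X,Y\}$ gives every agent utility $10$, hence utilitarian welfare $40$ and Nash welfare $10^4$. Therefore UTIL-MR and NASH-MR both return an outcome funding $\{X,Y\}$, which by the previous paragraph is not IMP; this proves the proposition. (The same instance also defeats EGAL-MR, since the sorted utility vector $(10,10,10,10)$ lexicographically dominates those of the other MR-feasible sets.)

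The main obstacle is calibrating the numbers so that two requirements hold at once: (i) $X$'s approvers are collectively too poor to fund $X$, which is what breaks IMP; and (ii) $\{X,Y\}$ is nevertheless MR-feasible and strictly beats every competing MR-feasible set under both welfare notions. These pull against each other, because making $X$'s approvers poorer shrinks the available MR budget; so the "helper" agent (agent $4$ here) must derive enough utility from a separately funded project (namely $Y$, bankrolled by agent $3$) to be permitted under MR to top up $X$, while her own budget stays small enough that her contribution is genuinely required. Once a working parameter choice is fixed, everything else is routine verification over the $2^{|C|}$ project sets.
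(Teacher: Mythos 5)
Your proof is correct and takes essentially the same approach as the paper: a two-project counterexample in which the MR-constrained welfare maximizers (for both UTIL and NASH) must fund both projects, while one project's approvers collectively cannot cover its cost, so no implementable payment exists. The paper's instance is just a slightly smaller variant (three agents, with the underfunded project's approver topped up by the other project's supporters), but the structural idea and the verification steps are the same.
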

		\begin{proof}
		 Consider the example below. UTIL-MR and NASH-MR will require that both projects are funded. However, at least one of Agents 1 and 2 must give some money to project $Y$, as Agent 3 cannot afford it by herself. Thus, this outcome is not implementable.
		\begin{table}[h!]
		\centering
		\begin{tabular}{lccccc} \hline
		            & & X (30) & Y (10)	\\ \hline
					&Budget	&  & \\ \hline
		    Agent 1 & 20 	& +      &  	      	\\ \hline
		    Agent 2 & 20	 	& +      &   	  	\\ \hline
		    Agent 3 & 5	 	&        & +  	  	\\ \hline
		\end{tabular}
		\end{table}
		\end{proof}
	
		\begin{proposition}
		UTIL-MR and UTIL-IMP do not satisfy IR (or CORE by corollary).
		\end{proposition}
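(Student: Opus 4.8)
The plan is to settle this by exhibiting a single small PFC instance on which the utilitarian‑welfare‑maximizing outcome is forced to fund a cheap, widely approved project instead of letting an agent buy her own favorite (more expensive) project, so that this agent ends up below her solo guarantee. Because the optimum I will use also lies inside both the MR class and the IMP class, the same instance simultaneously defeats UTIL-MR and UTIL-IMP; and since CORE implies IR (Proposition~\ref{Implications}), the contrapositive then kills CORE for both rules as well.

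Concretely, I would take $N=\{1,2\}$, $C=\{X,Y\}$ with costs $w(X)=10$ and $w(Y)=8$, budgets $b_1=10$ and $b_2=6$, and approvals $A_1=\{X,Y\}$, $A_2=\{Y\}$. The first step is to list the feasible funded sets with their utilitarian welfares: $\emptyset$ gives $0$; $\{X\}$ gives $u_1+u_2=10+0=10$; $\{Y\}$ gives $8+8=16$; and $\{X,Y\}$ costs $18>16=b_1+b_2$, so it is not a feasible outcome at all. Hence $\{Y\}$ is the unique utilitarian optimum among all feasible outcomes, a fortiori among MR outcomes and among IMP outcomes.

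The second step is to check that $\{Y\}$ can be realized inside both restricted classes: let agent $1$ pay $x_1=2$ and agent $2$ pay $x_2=6$, each sending her whole payment to $Y$. This is feasible ($x_i\le b_i$ and $x_1+x_2=w(Y)$), it is IMP (every positive payment goes to an approved project, namely $Y\in A_1\cap A_2$), and it is MR since $u_1(\{Y\})=u_2(\{Y\})=8$ exceeds both payments. Therefore an outcome with funded set $\{Y\}$ is exactly what UTIL-MR and UTIL-IMP return. The third step is the IR failure: $w(X)=10\le b_1$, so agent $1$ could fund $X$ by herself for utility $10$, whereas in the produced outcome she receives only $u_1(\{Y\})=8<10$. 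Thus IR is violated; and since CORE would imply IR by Proposition~\ref{Implications}, CORE fails too.

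I expect the only delicate point to be the joint calibration of the parameters. The budgets must be small enough that $\{X,Y\}$ cannot be co‑funded — otherwise the utilitarian optimum would also fund $X$ and agent $1$ would recover her full guarantee — while at the same time $Y$ must remain strictly more welfare‑efficient than $X$, the $\{Y\}$‑outcome must still fit inside agent $2$'s small budget, and it must satisfy MR and IMP. With only two agents and two projects there is essentially no slack, so the numbers have to be chosen so that all of these hold at once; once a valid choice such as the one above is fixed, the rest of the argument is a short case check.
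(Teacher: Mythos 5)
Your proposal is correct and follows essentially the same strategy as the paper's proof: construct a small instance in which the unique utilitarian optimum is achievable by an MR and IMP outcome, yet leaves one agent strictly below her stand-alone guarantee, and then invoke the CORE-implies-IR implication. The paper's counterexample uses four projects (agent 1 could solo-fund two cheap projects $W,X$ while the rules fund $\{Y,Z\}$), but the logical structure and verification steps are the same as yours.
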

		\begin{proof}
		Consider the example below. The overall (unique) utilitarian outcome is achieved by funding projects $Y$ and $Z$ with $x_1 = 4$ and $x_2 = 11$. Observe that this is an implementable outcome, and so this would be the result of UTIL-MR and UTIL-IMP. However, if Agent 1 left the system, she could have individually funded projects $W$ and $X$ which would have returned to her a greater utility. Therefore, this outcome does not satisfy IR.
		\begin{table}[h!]
		\centering
		\begin{tabular}{lccccc} \hline
		            & & W (3) & X (3) 	& Y (5) 	& Z (10)	\\ \hline
					&Budget	&  && \\ \hline
		    Agent 1 & 6 		& +     & +     	& + 		& 		\\ \hline
		    Agent 2 & 11	 	&       & 	   	& +		& +		\\ \hline
		\end{tabular}
		\end{table}
		\end{proof}
	
		\begin{proposition}
		NASH-MR, MASH-IMP, EGAL-MR and EGAL-IMP do not satisfy IR (or CORE by corollary).
		\end{proposition}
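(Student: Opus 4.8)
The plan is to settle all four rules at once with a single explicit counterexample, reusing (or lightly adapting) the instance already used above to show that EGAL alone fails IR: two projects $X$ (cost $20$) and $Y$ (cost $10$), Agent~1 with budget $20$ approving both $X$ and $Y$, and Agent~2 with budget $5$ approving only $Y$. First I would observe that the total budget is $25 < 30 = w(X)+w(Y)$, so no feasible outcome can fund both projects; hence the only project sets any outcome can fund are $\emptyset$, $\{X\}$, and $\{Y\}$.

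Next I would check implementability of these options. Funding $\{Y\}$ is IMP (hence MR): let Agent~1 pay $5$ and Agent~2 pay $5$, and since both approve $Y$ this payment is implementable and respects both budgets. Funding $\{X\}$ is IMP via Agent~1 paying $20$, and $\emptyset$ is trivially implementable. So the set of MR outcomes and the set of IMP outcomes each contain outcomes funding every one of $\emptyset$, $\{X\}$, $\{Y\}$, and I can compare these directly. The sorted utility vectors are $(0,0)$ for $\emptyset$, $(0,20)$ for $\{X\}$ (Agent~2 does not approve $X$), and $(10,10)$ for $\{Y\}$; the Nash welfares are $0$, $0$, and $100$. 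In both the restricted classes the outcome funding $\{Y\}$ is the unique leximin-optimal and the unique Nash-optimal one, so EGAL-MR, EGAL-IMP, NASH-MR, and NASH-IMP all return an outcome funding exactly $\{Y\}$, giving Agent~1 utility $10$.

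Finally I would note that Agent~1 can fund $\{X\}$ alone at cost $20 \le b_1 = 20$, obtaining utility $20 > 10$, so $u_1(S) < \max_{S'\subseteq A_1,\, w(S')\le b_1} w(S')$ and IR fails for all four rules. Since CORE implies IR (Proposition~\ref{Implications}, item~6), the same instance shows none of these rules satisfies CORE either.

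The argument is essentially bookkeeping; the only points that need care are (i) exhibiting a valid payment vector witnessing that funding $\{Y\}$ is IMP-feasible within both agents' budgets, and (ii) confirming that $\{Y\}$ strictly beats $\{X\}$ and $\emptyset$ under both the leximin and the Nash criterion once attention is restricted to MR (respectively IMP) outcomes, so that no other, possibly non-exhaustive, feasible outcome can be selected instead. I do not expect any genuine obstacle beyond this verification.
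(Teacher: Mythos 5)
Your proof is correct and follows essentially the same strategy as the paper: exhibit a concrete instance on which the EGAL/NASH rules constrained to MR (resp.\ IMP) outcomes must fund a project set under which some agent falls short of her stand-alone utility, then invoke the fact that CORE implies IR. The only difference is the instance: you reuse and verify the two-project $X(20)/Y(10)$ example (from the plain-EGAL IR failure), whereas the paper reuses its four-project example from the UTIL-MR/UTIL-IMP proposition; both check out, and your bookkeeping (feasible project sets $\emptyset$, $\{X\}$, $\{Y\}$, the IMP/MR payment witnesses, and the leximin/Nash comparison) is accurate.
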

		\begin{proof}
		 From the example above, we see that any egalitarian distribution also will fund only projects $Y$ and $Z$. For this output to be implementable (and also satisfy MR), we have $x_1 = 4$ and $x_2 = 11$ or $x_1 = 5$ and $x_2 = 10$. In either case, we have seen from above that this outcome will not satisfy IR.
		\end{proof}

		\begin{itemize}

		\item NASH-MR and NASH-IMP do not satisfy IR (or CORE by corollary): Same argument as above.
	
		\end{itemize}
	
	\begin{proposition}
	UTIL, UTIL-MR, UTIL-IMP, NASH, NASH-MR and NASH-IMP are not strategyproof.
	\end{proposition}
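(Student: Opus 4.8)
The plan is to exhibit a single PFC instance together with a profitable misreport by one agent, and then check that the manipulation changes the selected outcome in the deviator's favour under every one of the six rules. Since the statement is identical to the proposition proved in the main text, I would reuse the instance of Table~\ref{tab: UTIL not SP}: projects $X,Y,Z$ with costs $10,4,9$; agents $1$ and $2$ with budgets $8$ and $1$ who approve exactly $\{Y,Z\}$; and agent $3$ with budget $10$ who approves all three projects. Because $w(X)+w(Y)+w(Z)=23$ exceeds the total budget $19$, at most two projects can ever be funded, and since funding an extra relevant project only weakly raises every agent's utility, the utilitarian/Nash optimum lies among the three two-element sets. The project $Z$, approved by all agents, is included so that the Nash welfare of every candidate outcome is strictly positive.

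First I would treat the truthful profile. Using the welfare values in Table~\ref{tab: Performance of subsets against rules - SP}, the set $\{Y,Z\}$ is the unique maximiser of both the utilitarian and the Nash objective. I would then verify that $\{Y,Z\}$ is implementable: agents $1$ and $2$ together contribute $8+1=9=w(Z)$ to $Z$ while agent $3$ pays $w(Y)=4$ out of her budget of $10$, and this payment vector is also MR for every agent (each gets utility $13$, which is at least her charge). Hence $\{Y,Z\}$ is the common output of UTIL, UTIL-MR, UTIL-IMP, NASH, NASH-MR and NASH-IMP, and agent $3$ receives utility $w(Y)+w(Z)=13$. Next I would let agent $3$ drop her approval of $Y$, so she reports $\{X,Z\}$ (Table~\ref{tab: UTIL not SP - misrep}). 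Recomputing welfares on the perceived profile (Table~\ref{tab: Performance of subsets against rules - misrep - SP}), the set $\{X,Z\}$ now uniquely maximises both objectives; it is implementable with agents $1,2$ funding $Z$ and agent $3$ funding $X$, with MR-compatible payments, so $\{X,Z\}$ is again selected by all six rules. Under her true preferences agent $3$ then obtains utility $w(X)+w(Z)=19>13$, a strict gain, which establishes that none of the six rules is strategyproof.

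The only genuine obstacle — already handled by this construction — is to choose the costs and budgets so that one project set is simultaneously the unique optimiser under both the utilitarian and the Nash criterion, and so that the relevant sets stay implementable with MR payments on both the truthful and the manipulated profile; this is exactly why the costs are taken close to the agents' budgets and why the universally-approved project $Z$ is present. Everything else is the routine arithmetic already tabulated in the main text.
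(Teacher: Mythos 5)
Your proposal is correct and follows essentially the same route as the paper's own proof: the identical instance with projects $X(10), Y(4), Z(9)$, the same misreport in which Agent 3 drops her approval of $Y$, and the same verification that $\{Y,Z\}$ (truthful) and $\{X,Z\}$ (manipulated) are the unique welfare maximisers, are implementable (hence MR), and yield Agent 3 a strict utility gain from $13$ to $19$. Nothing further is needed.
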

	\begin{proof}
	Consider the instance given in Table \ref{tab: UTIL not SP}. Note that we only include project $Z$ for the purpose of making the NASH welfare of outcomes non-zero. Now, observe that it is impossible to fund all three projects, so our possible candidate project sets to be funded by the above rules are those where two projects get funded.
	
		\begin{table}[h!]
		\centering
		\begin{tabular}{lccccc} \hline
		            	& 		& X (10) 	& Y (4) & Z (9)	\\ \hline
					& Budget	&  		& 		&		\\ \hline
		    Agent 1 	& 8 		&      	& +     	& +		\\ \hline
		    Agent 2 	& 1	 	&       	& +   	& +		\\ \hline
		    Agent 3 	& 10		& + 		& +		& +		\\ \hline
		\end{tabular}
		\caption{Example instance where rules are not strategyproof.}
		\label{tab: UTIL not SP}
		\end{table}
	
		\begin{table}[h!]
		\centering
		\begin{tabular}{lccccc} \hline
		            	& $\{X, Y\}$	& $\{X, Z\}$	& $\{Y, Z\}$		\\ \hline
	Utilitarian Welfare 	& 22			& 37	    		& 39    			\\ \hline
	Nash Welfare    	& 224		& 1539	 	& 2197			\\ \hline
		\end{tabular}
		\caption{Utilitarian and Nash welfares of certain project sets to be funded.}
		\label{tab: Performance of subsets against rules - SP}
		\end{table}
	
		We check that there is an implementable outcome that funds $\{Y, Z\}$, and find that the outcome where Agents 1 and 2 pay for $Z$ and Agent 3 pays for $Y$ is implementable. Hence, $\{Y, Z\}$ is the result of UTIL, UTIL-MR, UTIL-IMP, NASH, NASH-MR, NASH-IMP. Note that the utility for Agent 3 is 13.
	
		Now, suppose Agent 3 were to misrepresent her preferences as in Table \ref{tab: UTIL not SP - misrep}. Again, according to this new (perceived) instance, it is impossible for all projects to be funded, so in Table \ref{tab: Performance of subsets against rules - misrep - SP} we check the welfares produced by funding any two of the projects.
	
		\begin{table}[h!]
		\centering
		\begin{tabular}{lccccc} \hline
		            	& 		& X (10) 	& Y (4) & Z (9)	\\ \hline
					& Budget	&  		& 		&		\\ \hline
		    Agent 1 	& 8 		&      	& +     	& +		\\ \hline
		    Agent 2 	& 1	 	&       	& +   	& +		\\ \hline
		    Agent 3 	& 10		& + 		& 		& +
		\end{tabular}
		\caption{Instance where Agent 3 is misrepresenting her preferences.}
		\label{tab: UTIL not SP - misrep}
		\end{table}
	
		\begin{table}[h!]
		\centering
		\begin{tabular}{lccccc} \hline
		            	& $\{X, Y\}$	& $\{X, Z\}$	& $\{Y, Z\}$		\\ \hline
		    UTIL 	& 18			& 37	    		& 35    			\\ \hline
		    NASH    	& 160		& 1539	 	& 1521			\\ \hline
		\end{tabular}
		\caption{Perceived welfares of certain project sets to be funded if Agent 3 misrepresents her preferences.}
		\label{tab: Performance of subsets against rules - misrep - SP}
		\end{table}

		Since $\{X, Z\}$ can be funded by an implementable outcome where Agents 1 and 2 paying for $Z$ and Agent 3 paying for $X$, $\{X, Z\}$ is the result of UTIL, UTIL-MR, UTIL-IMP, NASH, NASH-MR, NASH-IMP. With this outcome, Agent 3 sees her utility rise to 19.

		Then, by misrepresenting her preferences, Agent 3 can cause the choice of the aforementioned rules to change from funding $\{Y, Z\}$ to funding $\{X, Z\}$, hence increasing her own utility.	Therefore, UTIL, UTIL-MR, UTIL-IMP, NASH, NASH-MR and NASH-IMP are not strategyproof.
	
	\end{proof}

	\begin{proposition}
	EGAL, EGAL-MR and EGAL-IMP are not strategyproof.
	\end{proposition}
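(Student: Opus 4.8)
The plan is to exhibit a single small PFC instance together with a profitable misreport for one agent, and to verify that the misreport changes the outcome selected by all three variants EGAL, EGAL-MR, and EGAL-IMP. The key structural trick is to make the instance tight with respect to budgets so that only two of the three projects can ever be co-funded; this keeps the space of candidate outcomes small enough to enumerate by hand, and it also forces the relevant outcomes to be implementable, so that the same project set is simultaneously leximax-optimal for the unconstrained rule and for the MR- and IMP-constrained rules.

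Concretely, I would take three agents and three projects, with two ``cheap'' low-budget agents who jointly approve the two least expensive projects, and one ``rich'' agent who approves the expensive project together with one of the cheap ones. Under the truthful profile, the lexicographically maximal sorted utility vector is achieved by funding the two cheap projects (paid for entirely by the two poor agents), which leaves the rich agent with a strictly positive but non-maximal utility; one checks that this outcome is implementable, hence it is the output of EGAL, EGAL-MR, and EGAL-IMP alike. Then I would have the rich agent drop her approval of the cheap project she shares with the poor agents. In the perceived instance, funding that shared project no longer raises the (perceived) worst-off utility, so the leximax outcome shifts to funding the expensive project with the remaining cheap one, again realisable by an implementable payment scheme (poor agents pay the cheap one, rich agent pays the expensive one). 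Since the rich agent in fact approves the expensive project, her \emph{true} utility under the new outcome strictly exceeds her utility under the truthful outcome, establishing the manipulation for all three rules at once.

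The main obstacle is the simultaneous-optimality bookkeeping: the costs and budgets must be chosen so that (i) in both the truthful and the misreported instance exactly the intended two-project set is the unique leximax outcome, (ii) that set admits an implementable (and hence MR) payment vector in both instances, so that restricting to IMP or MR does not alter the selected set, and (iii) the rich agent's true utility strictly increases after the misreport. I expect to resolve this by a direct case analysis over the three two-element project subsets, tabulating the sorted utility vectors (as in the accompanying tables) and checking implementability via the network-flow / linear-program characterisation of Proposition~\ref{prop:testimp}.
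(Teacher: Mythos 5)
Your proposal is correct and follows essentially the same route as the paper's proof: the paper uses exactly this three-agent, three-project construction (projects costing 3, 2, 1; two unit-budget agents approving the two cheap projects and a rich agent approving the expensive one plus a shared cheap one, who then drops the shared approval), enumerates the two-project sets' sorted utility vectors, and checks implementability so that EGAL, EGAL-MR and EGAL-IMP all select the same sets. The only cosmetic difference is that with the paper's numbers the post-misreport leximax outcome is the expensive project together with the \emph{shared} cheap project rather than the remaining one, which only makes the manipulator's true gain larger; your cost/budget bookkeeping step would surface this, and the argument is unaffected.
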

	\begin{proof}
	Consider the instance given in Table \ref{tab: EGAL not SP}. Due the total budget constraint, at most two of the projects can be funded, so we check the egalitarian welfare derived by funding any two projects in Table \ref{tab: Performance of subsets against EGAL rules - SP}.

		\begin{table}[h!]
		\centering
		\begin{tabular}{lccccc} \hline
		            	& 		& X (3) 	& Y (2) & Z (1)	\\ \hline
					& Budget	&  		& 		&		\\ \hline
		    Agent 1 	& 1 		&      	& +     	& +		\\ \hline
		    Agent 2 	& 1	 	&       	& +   	& +		\\ \hline
		    Agent 3 	& 3		& + 		& +		& 		\\ \hline
		\end{tabular}
		\caption{Example instance where EGAL rules are not strategyproof.}
		\label{tab: EGAL not SP}
		\end{table}
	
		\begin{table}[h!]
		\centering
		\begin{tabular}{lccccc} \hline
		            		& $\{X, Y\}$	& $\{X, Z\}$	& $\{Y, Z\}$		\\ \hline
	Egalitarian Welfare 	& (2, 2, 5)	& (1, 1, 3)	& (2, 3, 3)		\\ \hline
		\end{tabular}
		\caption{Egalitarian welfares of certain project sets to be funded.}
		\label{tab: Performance of subsets against EGAL rules - SP}
		\end{table}
	
		Observe that it is possible for an implementable outcome to fund $\{Y, Z\}$ by having Agents 1 and 2 pay for them, and so $\{Y, Z\}$ is funded by each of the above rules. Then, the utility for Agent 3 is 2.
	
		Now, suppose Agent 3 misrepresents her preferences to suppress the fact that she approves of project $Y$. The new perceived instance is shown in Table \ref{tab: EGAL not SP - misrep} and again, we compute the egalitarian welfare produced by funding any two projects in Table \ref{tab: Performance of subsets against EGAL rules - misrep - SP}.
	
		\begin{table}[h!]
		\centering
		\begin{tabular}{lccccc} \hline
		            	& 		& X (3) 	& Y (2) & Z (1)	\\ \hline
					& Budget	&  		& 		&		\\ \hline
		    Agent 1 	& 1 		&      	& +     	& +		\\ \hline
		    Agent 2 	& 1	 	&       	& +   	& +		\\ \hline
		    Agent 3 	& 3		& + 		& 		& 		\\ \hline
		\end{tabular}
		\caption{Example instance where rules are not strategyproof.}
		\label{tab: EGAL not SP - misrep}
		\end{table}
	
		\begin{table}[h!]
		\centering
		\begin{tabular}{lccccc} \hline
		            		& $\{X, Y\}$	& $\{X, Z\}$	& $\{Y, Z\}$		\\ \hline
	Egalitarian Welfare 	& (2, 2, 3)	& (1, 1, 3)	& (0, 3, 3)		\\ \hline
		\end{tabular}
		\caption{Egalitarian welfares of certain project sets to be funded.}
		\label{tab: Performance of subsets against EGAL rules - misrep - SP}
		\end{table}

		Note that there is an implementable outcome that funds $\{X, Y\}$, where Agent 3 pays for $X$ and Agents 1 and 2 pay for $Y$. Hence, $\{X, Y\}$ is funded by each of the rules. The new utility for Agent 3 is 3.
	
		Thus, by misrepresenting her preferences, Agent 3 is able to increase the utility she receives when egalitarian rules are used. Therefore, EGAL, EGAL-MR and EGAL-IMP are not strategyproof.
	
		\end{proof}
	
\end{document}